\documentclass[a4paper,UKenglish,cleveref, autoref, thm-restate, numberwithinsect]{lipics-v2021}

\pdfoutput=1 %
\hideLIPIcs  %

\bibliographystyle{plainurl}%

\title{Context-Free Languages of String Diagrams} %

\author{Matt Earnshaw}{Department of Software Science, Tallinn University of Technology, Estonia \and \url{https://ioc.ee/~matt}}{matt@earnshaw.org.uk}{https://orcid.org/0000-0001-8236-2811}{}%

\author{Mario Román}{Department of Computer Science, University of Oxford, United Kingdom \and \url{https://mroman42.github.io}}{mroman42@gmail.com}{https://orcid.org/0000-0003-3158-1226}{}

\authorrunning{M. Earnshaw and M. Román} %

\Copyright{Matt Earnshaw and Mario Román} %

\ccsdesc[100]{Theory of computation~Formal languages and automata theory}
\ccsdesc[100]{Theory of computation~Categorical semantics}

\keywords{monoidal categories, language theory, context-free languages, string diagrams}

\category{} %

\relatedversion{} %

\funding{This research was supported by the ESF funded Estonian IT Academy research measure (project 2014-2020.4.05.19-0001) and the Estonian Research Council grant PRG1210.}%

\acknowledgements{We would like to thank Noam Zeilberger, Paul-André Melliès, Pawe\l{} Soboci\'{n}ski, and Chad Nester for helpful discussions, and Tobias Heindel for allowing us to believe it was possible.}

\nolinenumbers %

\EventEditors{John Q. Open and Joan R. Access}
\EventNoEds{2}
\EventLongTitle{42nd Conference on Very Important Topics (CVIT 2016)}
\EventShortTitle{CVIT 2016}
\EventAcronym{CVIT}
\EventYear{2016}
\EventDate{December 24--27, 2016}
\EventLocation{Little Whinging, United Kingdom}
\EventLogo{}
\SeriesVolume{42}
\ArticleNo{23}

\usepackage[leftmargin=1em,rightmargin=1ex,vskip=1ex]{quoting}
\usepackage{amsfonts}
\usepackage{amsthm}
\usepackage{amsmath}
\usepackage{cleveref} %
\usepackage{caption}
\usepackage{mathtools} %
\usepackage{tikz}
\usepackage{tikz-cd}
\usepackage{svg}
\usepackage{float}
\usepackage{subcaption}
\usepackage{scalerel}
\usepackage{mathrsfs}
\usepackage{bussproofs}
\usepackage{hyphenat}
\usepackage{adjustbox}
\usepackage{stackengine}
\usepackage[utf8]{inputenc}
\usepackage[T1]{fontenc}
\usepackage{stmaryrd}
\usepackage[notion,xcolor,hyperref,quotation]{knowledge}
\knowledgeconfigure{protect quotation={tikzcd}}

\usepackage{unicode-mario}
\usepackage{macros}
\usepackage{mathpartir}

\knowledge{notion}
 | regular monoidal grammar
 | Regular monoidal grammars
 | regular monoidal grammars

\knowledge{notion} 
| symmetric multigraph
| symmetric multigraphs
| Symmetric multigraph
| Symmetric multigraphs

\knowledge{notion}
| morphism of symmetric multigraphs

\knowledge{notion}
 | context-free monoidal grammar
 | context-free monoidal grammars
 | Context-free monoidal grammars
 | Context-free monoidal grammar

\knowledge{notion}
 | finitary

\knowledge{notion}
| polygraph
| polygraphs

\knowledge{notion}
| regular monoidal
| regular monoidal language
| regular monoidal languages
| Regular monoidal languages

\knowledge{notion}
 | monoidal language
 | monoidal languages
 | language of string diagrams
 | languages of string diagrams

\knowledge{notion}
 | regular grammar
 | regular grammars
 | Regular grammars

\knowledge{notion}
 | multigraph
 | multigraphs
 | Multigraph
 | Multigraphs

\knowledge{notion}
 | multicategory of spliced arrows
 | spliced arrows
 | spliced arrow

\knowledge{notion}
 | free strict monoidal category
 | free monoidal category

\knowledge{notion}
 | contour

\knowledge{notion}
 | string diagrams
 | string diagram
 | String diagram
 | String diagrams

\knowledge{notion}
 | context-free languages of string diagrams
 | context-free monoidal language
 | context-free monoidal languages
 | Context-free monoidal languages
 
\knowledge{notion}
 | monoidal categories
 | monoidal category
 | \MonCat
 | strict monoidal category
 | Strict monoidal category
 | strict monoidal categories

\knowledge{notion}
 | multicategory
 | multicategories
 | Multicategory
 | Multicategories

\knowledge{notion}
 | multicategory of raw optics
 | raw optics
 | raw optic

\knowledge{notion}
 | optical contour
 | Optical contour
 | optical contours

\knowledge{notion}
 | monoidal functor
 | strict monoidal functors
 | strict monoidal functor
 | Strict monoidal functor

\knowledge{notion}
 | morphism of multigraphs

\knowledge{notion}
 | multifunctor
 | multifunctors
 
\knowledge{notion}
 | raw representative
 
\knowledge{notion}
 | regular representation
 | regular representative
 
\knowledge{notion}
 | morphism of polygraphs

\knowledge{notion}
 | sector
 | sectors

\knowledge{notion}
 | derivation
 | derivations

\knowledge{notion}
 | operation
 | operations
 | Operations

\knowledge{notion}
 | monoidal automaton
 | monoidal automata
 | non-deterministic monoidal automata
 | Non-deterministic monoidal automata
 | non-deterministic monoidal automaton

\knowledge{notion}
 | generator
 | generators
 | Generator
 | Generators

 \knowledge{notion}
 | diagram context
 | diagram contexts
 | Diagram contexts
 | holed diagram
 | holed diagrams
 | Holed diagrams
 | Holed diagram

\knowledge{notion}
 | cartesian monoidal category
 | cartesian monoidal categories
 | Cartesian category
 | cartesian category

\knowledge{notion}
 | hypergraph categories
 | Hypergraph category
 | free hypergraph category
 | hypergraph category

\knowledge{notion}
 | symmetric multifunctor

\knowledge{notion}
 | symmetric multicategory
 
\knowledge{notion}
 | free multicategory

\knowledge{notion}
 | symmetric monoidal category
 | symmetric monoidal categories
 | symmetric strict monoidal category
 
\usepackage[disable,color=yellow]{todonotes}

\hyphenation{pro-duoi-dal}
\hyphenation{ter-mi-nals}
\hyphenation{mo-noi-dal}
\hyphenation{au-to-ma-ton}

\allowdisplaybreaks

\makeatletter
\renewcommand{\smalloplus}{\mathbin{\mathpalette\make@small\oplus}}
\renewcommand{\smallotimes}{\mathbin{\mathpalette\make@small\otimes}}

\newcommand{\make@small}[2]{%
  \vcenter{\hbox{%
    $\m@th\ifx#1\displaystyle\scriptstyle\else\ifx#1\textstyle\scriptstyle
     \else\scriptscriptstyle\fi\fi#2$%
  }}%
}
\makeatother

\makeatletter
\def\@copyrightspace{\relax}
\makeatother

\begin{document}
\maketitle

\begin{abstract}
    We introduce context-free languages of morphisms in monoidal categories, extending recent work on the categorification of context-free languages, and regular languages of string diagrams. Context-free languages of string diagrams include classical context-free languages of words, trees, and hypergraphs, when instantiated over appropriate monoidal categories. We prove a representation theorem for context-free languages of string diagrams: every such language arises as the image under a monoidal functor of a regular language of string diagrams.
\end{abstract}

\section{Introduction} \label{sec:intro}%
\emph{Monoids} are the classical algebraic home of formal languages, but a long line of research beginning in the 60s has sought to extend the tools and concepts of language theory to other algebraic structures, such as trees \cite{10.5555/267871.267872,10.1145/800169.805428}, traces \cite{booktraces}, hypergraphs \cite{courcelle_engelfriet_2012,habel,rozenberg1997}, models of algebraic theories \cite{EILENBERG1967452,Thatcher1968}, algebras for monads \cite{blumensath2021algebraic,mmso}, and categories \cite{7174900,TILSON198783}.

Categories are ``monoids with many objects'', and passing from the theory of context-free languages in monoids to the theory of context-free languages in categories has been the subject of recent work by \MZ{} \cite{mellies2022parsing,mellies2}. This novel structural point of view suggests a natural generalization to categories with additional structure. Here, we pursue this idea for \emph{monoidal categories}.
On the one hand, strict monoidal categories are \emph{two-dimensional monoids}, and so a natural step from their one-dimensional counterpart. On the other hand, they have a natural graphical syntax, \emph{string diagrams}, providing a fresh approach to languages of graphs.

A vast literature has explored language theory in various algebras of graphs, culminating in the celebrated results of Courcelle \cite{courcelle_engelfriet_2012}. Our point of departure is the claim that many graphical notions can be naturally viewed as \emph{morphisms in monoidal categories}; that is, monoidal categories provide a suitable algebraic framework for graphical formal languages. This manuscript pursues this idea in the context of recent work in the foundations of language theory which takes a structural approach to \emph{context-freeness}. Ultimately, this line of work seeks to unify the various generalizations of context-free languages, and identify reusable tools for reasoning about them.

\subsection{Languages of string diagrams}
Monoidal categories have an intuitive, sound and complete graphical syntax: \emph{string diagrams}. %
String diagrams resemble graphical languages commonly found in engineering and science, and indeed, they allow us to reason about Markov kernels \cite{fritz:synthetic}, linear algebra \cite{bonchi:graphicalaffinealgebra2019}, or quantum processes \cite{abramsky2009categorical}. In computer science, they provide foundations for visual programming \cite{jeffrey1997premonoidal,kuhail2021characterizing}.

The use of string diagrams as a syntax in these various domains suggests the need for a corresponding theory of string diagrams as a formal language. This is one aim of recent work on languages of string diagrams or \emph{monoidal languages}, such as that elaborated by \Sobocinski{} and the first author \cite{earnshaw22,earnshaw_et_al:LIPIcs.MFCS.2023.43}, who introduced the class of \emph{regular monoidal languages}. A monoidal language in this sense is simply a subset of morphisms in a strict monoidal category, just as a classical formal language is a subset of a monoid. In this work, we introduce a natural class of \emph{context-free} monoidal languages, which capture various extended notions of context-free language found in the computer science literature.

\subsection{Context-free languages over categories}
Our main point of reference in this paper is the recent work of \MZ{} \cite{mellies2022parsing,mellies2}. This work is a thoroughgoing refashioning of the theory of context-free languages from a ``fibrational'' point of view. \MZ{} demonstrate that it is natural and fruitful to consider context-free languages over arbitrary \emph{categories}. They introduce an adjunction between \emph{splicing} (introducing gaps or contexts in terms) and \emph{contouring} (linearizing derivation trees), and use it to give a novel conceptual proof of the \CS{} representation theorem.

\MZ{} provide an ample supply of examples of context-free languages in categories, such as context-free languages of runs over an automaton, languages with an explicit end-of-input marker, multiple context-free grammars \cite{SEKI1991191} and a grammar of series-parallel graphs. However, it is less clear how notions such as context-free grammars of trees and hypergraphs fit into this framework. In this paper, we show how this can be accomplished by adapting the machinery of \MZ{} to the wider setting of monoidal categories and their string diagrams. This generalization is non-trivial, and sheds light on the intriguing differences between languages of string diagrams and classical languages. In particular, our two-dimensional version of the \CS{} representation theorem says that every context-free language of string diagrams is the image under a monoidal functor of a regular language of string diagrams: no intersection of context-free and regular languages is necessary.

\paragraph*{Related work} The representation of context-free grammars as certain morphisms of multigraphs was introduced by Walters in a short paper \cite{walters1989}. A similar type-theoretical version of this idea was also introduced by De Groote \cite{de-groote-2001-towards}. As discussed more extensively above, this idea was taken up and substantially refined by \MZ{}, first in a conference paper \cite{mellies2022parsing} and later in an extended version \cite{mellies2}.

A different notion of context-free families of string diagrams has been introduced by Zamdzhiev \cite{zamdzhiev2016}. There, string diagrams are defined combinatorially as \emph{string graphs}, and context-free families are then generated by B-edNCE graph grammars \cite{rozenberg1997}. Though similar, the resulting notion is not directly comparable to ours.
Here, we use the native algebra of monoidal categories and their multicategories of contexts to define and investigate languages. %

Finally, Heindel's abstract \cite{Heindel2017TheCT} claims a proof of a \CS{} theorem for morphisms in symmetric monoidal categories, but the work described in this abstract was never published. Our development is quite different from that outlined in Heindel's abstract. We prove a stronger representation theorem that does not require an intersection of languages; we work without the assumption of symmetry; and we generalize the categorical machinery of \MZ{}.

\paragraph*{Contributions}
We introduce "context-free languages of string diagrams" (\Cref{def:cfml}) and show that they include a wide variety of examples in the computer science literature including context-free languages of trees and hypergraphs.
We introduce the category of "raw optics" (\Cref{def:rawoptics}) over a "monoidal category", and its left adjoint, the "optical contour" (\Cref{def:opticalContour,thm:opticadjoint}).
We use this machinery to prove a representation theorem for "context-free monoidal languages" (\Cref{thm:main}), also relating them to previous work on "regular monoidal languages".

\section{Preliminaries} \label{sec:prelims} %
In this paper, we define context-free grammars as particular \emph{morphisms}. This point of view, while perhaps unfamiliar, is simple and powerful. It suggests natural generalizations of context-free grammars, such as we will pursue in the main body of the paper, and new conceptual tools for reasoning about them. This idea is not original to us; its roots go back to Walters \cite{walters1989}, with recent refinement and extension by \MZ{} \cite{mellies2022parsing,mellies2}. In this section, we introduce these ideas. We also introduce some background on monoidal categories and their string diagrams, which will be needed for our extension of context-free grammars to this realm.

\subsection{Context-free languages in free monoids and other categories}
We introduce the definition of context-free grammar as a morphism of certain \emph{multigraphs}. "Multigraphs" (or \emph{species} in the work of Melliès and Zeilberger \cite{mellies2}) are a kind of graph in which edges have a \emph{list} of sources and single target. It is often helpful to think of a "multigraph" as a \emph{signature}, specifying a set of typed operations. Note that this is a different use of the term \emph{multigraph} from that specifying graphs allowing multiple parallel edges.

\begin{definition}
  A ""multigraph"" $M$ is a set $S$ of \emph{sorts}, and sets $M(X_1,...,X_n;Y)$ of generating ""operations"" (or \emph{multimorphisms}), for each pair of a list of sorts $X_1,...,X_n$ and a sort $Y$. A multigraph is \emph{finite} if sorts and operations are finite sets. A ""morphism of multigraphs"" is given by a function $f$ on sorts and functions $M(X_1,...,X_n;Y) \to M(fX_1, ..., fX_n; fY)$ between sets of operations.
\end{definition}

"Multigraphs" freely generate "multicategories", also known as \emph{operads} (though this term sometimes refers only to the single-sorted, symmetric case). See Leinster \cite{leinster04} for a comprehensive reference on "multicategories". The free multicategory $\fmult{M}$ over a "multigraph" $M$ has as multimorphisms $\fmult{M}(X_1, ..., X_n; Y)$ the ``trees'' rooted at $Y$, with open leaves $X_1, ..., X_n$, that one can build by ``plugging together'' operations in $M$. We call closed trees, i.e. nullary multimorphisms $d \in \fmult{M}(; Y)$, ""derivations"". Every multicategory $𝕄$ has an underlying "multigraph", denoted $\U{𝕄}$, given by forgetting identities and composition.

Every rule in a context-free grammar is of the form $R \to w_1R_1...R_nw_n$, where $R,R_i$ are non-terminals, and $w_i$ are (possibly empty) words over an alphabet $Σ$. The insight of \MZ{} \cite{mellies2} is that this data may be arranged as an operation $R_1,...,R_n \to R$ in a multigraph over an n-ary operation $w_1-...-w_n$ called a \emph{spliced word}: a word with $n$ gaps, as in \Cref{fig:cfgmz}. We introduce the multicategory of \emph{spliced arrows in a category}.

\begin{definition}[\MZ{} \cite{mellies2}] \label{defn:mzsplice}
  The ""multicategory of spliced arrows"", $\Splice{\mathbb{C}}$, over a category $ℂ$, contains, as objects, pairs of objects of $ℂ$, denoted as $\biobj{A}{B}$. Its multimorphisms are morphisms of the original category, but with $n$ ``gaps'' or ``holes'', into which other morphisms (with holes) may be \emph{spliced}. More precisely, the multimorphisms of $\Splice{\mathbb{C}}$ are given by:
  \vspace*{-2mm} $$\Splice{\mathbb{C}}(\biobj{A_1}{B_1}, \dots, \biobj{A_n}{B_n}; \biobj{X}{Y}) := \mathbb{C}(X;A_1) \times \prod_{i=1}^{n-1} \mathbb{C}(B_i; A_{i+1}) \times \mathbb{C}(B_n;Y).$$
    By convention, nullary multimorphisms are morphisms of $ℂ$, that is $\Splice{\mathbb{C}}(;\biobj{X}{Y}) := \mathbb{C}(X;Y)$. The identity is given by a pair of identities of the original category, multicategorical composition is derived from the composition of the original category.
\end{definition}

\begin{figure}
  \centering
  \includegraphics[scale=0.5]{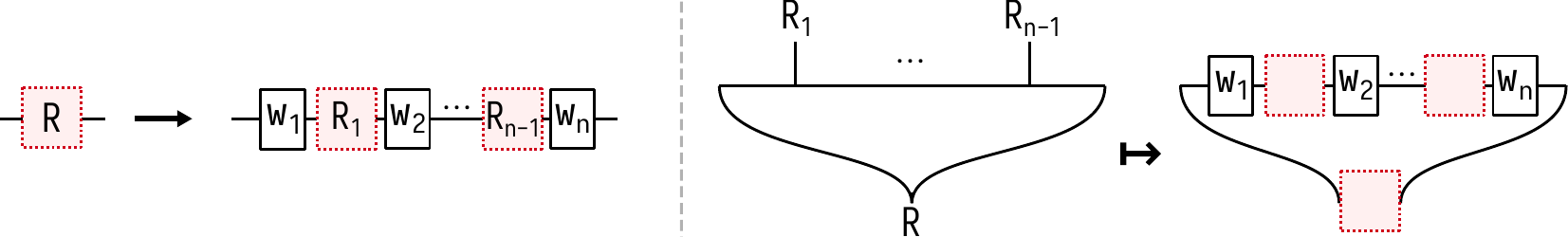}
  \caption{(Left) Generic form of a context-free rule. (Right) Context-free rules as a morphism of multigraphs into spliced arrows; here, spliced arrows in a monoid.}
  \label{fig:cfgmz}
\end{figure}

We can now present a context-free grammar in terms of a morphism of "multigraphs" from a multigraph of non-terminals to the underlying "multigraph" of "spliced arrows", as in \Cref{fig:cfgmz}.

\begin{definition}[\MZ{} \cite{mellies2}] \label{defn:mzcfg}
  A context-free grammar of morphisms in a category $ℂ$ is a morphism of "multigraphs" $G → |\Splice{ℂ}|$ and a sort $S$ in $G$ (the start symbol).
\end{definition}

By the free-forgetful adjunction between "multicategories" and "multigraphs", morphisms $\phi : G \to |\Splice{ℂ}|$ and morphisms of multicategories (or multifunctors) $\widehat{\phi} : \fmult{G} \to \Splice{ℂ}$ are in bijection. This allows for a slick definition of the language of a grammar.

\begin{definition}[\MZ{} \cite{mellies2}] \label{defn:mzcfglang}
  Let $\mathcal{G} = (\phi : G → |\Splice{ℂ}|, S)$ be a context-free grammar of morphisms in $ℂ$. The language of $\mathcal{G}$ is given by the image of the set of "derivations" $\fmult{G}(;S)$ under the multifunctor $\widehat{\phi}$.
\end{definition}

When $ℂ$ is a finitely generated free monoid considered as a one-object category, then context-free grammars over $ℂ$ correspond precisely to the classical context-free grammars. %

An important realization of \MZ{} is that the operation of forming the "multicategory of spliced arrows" in $ℂ$ has a left adjoint. That is, every multicategory gives rise to a category called the ""contour"" of $𝕄$, and this contouring operation is left adjoint to splicing. We refer to their paper for more details \cite[Section 3.2]{mellies2}. Contours give a conceptual replacement for Dyck languages in the classical theory of context-free languages: they linearize the shape of derivation trees.

In \Cref{sec:contour}, we define a new contour of "multicategories" which we call the \emph{optical contour}; we shall use it to prove a representation theorem for languages of string diagrams (\Cref{thm:main}), inspired by generalized \CS{} representation theorem proved by \MZ{}.

\subsection{Monoidal categories, their string diagrams and languages}
In this paper, we will mostly be concerned with monoidal categories presented by generators and equations between the string diagrams built from these generators. Generators are given by \emph{polygraphs}.

\begin{definition} \label{defn:polygraph}
  A ""polygraph"" $Γ$ is a set $S_Γ$ of \emph{sorts}, and sets $Γ(S_1⊗...⊗S_n; T_1⊗...⊗T_m)$ of ""generators"" for every pair of lists $S_i, T_j$ of sorts.
  A polygraph is \emph{finite} if sorts and generators are finite sets.
  A ""morphism of polygraphs"" is a function $f$ on sorts and functions $Γ(S_1⊗...⊗S_n; T_1⊗...⊗T_m) \to Γ(fS_1⊗...⊗fS_n; fT_1⊗...⊗fT_m)$ between generators.
\end{definition}

\begin{figure}[h]
  \centering
  \includegraphics[width=0.9\textwidth]{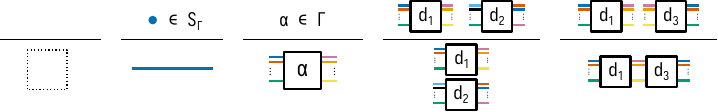}
  \caption{The free strict monoidal category over a polygraph $Γ$ has set of objects $S_Γ^{*}$ and morphisms string diagrams given inductively over the generators of $Γ$ as above, quotiented by isotopy. The leftmost rule denotes the empty diagram. We use colours here to indicate sorts. In string diagrammatic syntax, the usual equations required for term syntax, such as associativity of the tensor product, hold automatically.}
  \label{fig:string-diagrams}
\end{figure}

For a "generator" $γ$ of arity $S_1⊗...⊗S_n$ and coarity $T_1⊗...⊗T_m$ we write $γ : S_1 ⊗ ... ⊗ S_n \to T_1 ⊗ ... ⊗ T_m$. When $S$ is single-sorted, we use natural numbers for the arities and coarities; this case will cover most of the examples in the following. As a string diagrams, we depict generators as boxes with strings on the left and right for their arities and coarities (\Cref{fig:string-diagrams}).

\begin{proposition}
  String diagrams with "generators" in a "polygraph" construct a "monoidal category" (\Cref{fig:string-diagrams}). The monoidal category of string diagrams over a polygraph is the ""free strict monoidal category"" over the polygraph \cite{joyal91,selinger2011}. Every monoidal category is equivalent to a strict one. In particular, string diagrams are sound and complete for monoidal categories.
\end{proposition}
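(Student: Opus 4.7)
The plan is to prove each of the four assertions in turn, leaning heavily on the classical coherence theorem of Joyal and Street.

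First, I would verify that the inductive constructions in \Cref{fig:string-diagrams}, quotiented by planar isotopy, assemble into a strict monoidal category with objects the set $S_\Gamma^{*}$ of lists of sorts. Sequential composition is vertical juxtaposition along matching boundaries; tensor is horizontal juxtaposition; identities are identity wires. Strict associativity and unit laws for $\otimes$ are immediate from treating objects as lists with tensor as concatenation, while the corresponding laws for $\circ$ fall out of the inductive presentation. The crucial strict axiom is the interchange law $(f \circ g) \otimes (h \circ k) = (f \otimes h) \circ (g \otimes k)$, which is precisely the planar isotopy that slides two boxes past each other at different vertical heights, and therefore holds on the nose after the quotient.

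Second, for freeness, given a strict monoidal category $\mathbb{D}$ and a polygraph morphism from $\Gamma$ into the underlying polygraph of $\mathbb{D}$, I would construct the unique extension to a strict monoidal functor by structural recursion on diagrams: send each generator to its image, identity wires to identities, sequential composition to $\circ$, and tensor to $\otimes$. Well-definedness on isotopy classes is exactly the content of Joyal--Street coherence \cite{joyal91}: planar isotopy coincides with the congruence on formal diagrams generated by the strict monoidal axioms, so these operations descend to the quotient. Uniqueness is forced since a strict monoidal functor is determined by its action on generators.

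Third, the assertion that every monoidal category is monoidally equivalent to a strict one is Mac Lane's strictification theorem, which I would simply cite. Finally, soundness and completeness follow formally by chaining the previous steps: a diagrammatic equation holds in every monoidal category iff it holds in every strict monoidal category (by strictification, since monoidal equivalences preserve and reflect equations), iff it holds in the free strict monoidal category over the generating polygraph (by the universal property and functoriality of interpretation), iff the two diagrams are planar isotopic (by coherence).

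The principal technical obstacle is the coherence step identifying planar isotopy with the strict monoidal congruence. I would not attempt to reprove this, but instead defer to the cited literature \cite{joyal91,selinger2011}; once that is in hand, the remainder of the proposition is a routine unpacking of universal properties.
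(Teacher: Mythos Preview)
Your proposal is sound, but note that the paper does not actually prove this proposition: it is stated as a standard result with citations to Joyal--Street \cite{joyal91} and Selinger \cite{selinger2011}, and the surrounding text simply directs the reader to those sources for soundness and completeness. Your sketch is therefore more detailed than what the paper provides, but it is entirely in the same spirit --- you correctly identify that the only substantive content is the Joyal--Street coherence theorem identifying planar isotopy with the strict monoidal congruence, and you defer to the same literature for it. The remaining points (verifying the strict monoidal axioms on diagrams, the universal property via structural recursion, Mac~Lane strictification, and the formal chain yielding soundness and completeness) are exactly the routine unpacking one would expect and are not spelled out in the paper at all.
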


We shall need to impose equations between string diagrams, such as in defining "symmetric monoidal categories", "cartesian monoidal categories" and "hypergraph categories". To this end, we introduce the following notion of presentation.

\begin{definition}\label{defn:finpres}
  A finite presentation of a "strict monoidal category" consists of a finite polygraph of generators, $𝓟$, and a finite polygraph of equations, $𝓔$, with projections for the two sides of each equation, $l,r ፡ 𝓔 → \U{\fpro{𝓟}}$. The "strict monoidal category" presented by $(𝓟,𝓔,l,r)$, is defined as the free "strict monoidal category" generated by $𝓟$ and quotiented by the equations in $𝓔$; in other words, the equalizer of the two projections $l^{\ast}, r^{\ast} ፡ \fpro{𝓔} → \fpro{𝓟}$.
\end{definition}

For the soundness and completeness of string diagrams, see Joyal and Street \cite{joyal91}. For a survey of string diagrams for monoidal categories, see Selinger \cite{selinger2011}. For an introduction to the practical use of string diagrams, see Vicary and Heunen \cite{heunen2019categories} or Hinze and Marsden \cite{hinze}.

\begin{definition}
A ""monoidal language"" or \emph{language of string diagrams} is a subset of morphisms in a "strict monoidal category". 
\end{definition}

\section{Regular Monoidal Languages} \label{sec:regular}
Before introducing "context-free monoidal languages", we introduce the \emph{regular} case, which shall play an important role in \Cref{sec:monoidal-cs}. "Regular monoidal languages" were introduced by \Sobocinski{} and the first author \cite{earnshaw22,earnshaw_et_al:LIPIcs.MFCS.2023.43}, following earlier work of Bossut and Heindel \cite{bossut,Heindel2017AMT}. They are defined by a simple automaton model, reminiscent of tree automata. In a regular monoidal language, the \emph{alphabet} is given by a finite "polygraph". %

\begin{definition} \label{defn:ndmonaut}
  A ""non-deterministic monoidal automaton"" comprises:
  a finite "polygraph" $\Gamma$ (the alphabet);
  a finite set of states $Q$;
  for each generator $\gamma : n → m$ in $\Gamma$, a transition function $\Delta_\gamma : Q^n \to \mathscr{P}(Q^m)$;
  and initial and final state vectors $i, f \in Q^{*}$.
  \end{definition}

\begin{example}
  Classical non-deterministic finite state automata arise as "monoidal automata" over single-sorted "polygraphs" in which every "generator" has arity and coarity 1.
  Bottom-up regular tree automata \cite{10.5555/267871.267872} arise precisely from "monoidal automata" over single-sorted "polygraphs" in which every generator has coarity $1$ and arbitrary arity, with initial state the empty word and final state a singleton. %
\end{example}

A finite state automaton over an alphabet $\Sigma$ accepts elements of the free monoid $\Sigma^{*}$. A "monoidal automaton" over a "polygraph" $Γ$ accepts morphisms in the "free monoidal category" $\fpro{Γ}$ over $\Gamma$. Let us see some examples before giving the formal definition of the accepted language. We depict the transitions of a "monoidal automaton" as elements of a "polygraph" with strings labelled by states, and generators labelled by the corresponding element of $Γ$.

\begin{example}
  Consider the following "polygraph" containing "generators" (left, below) for an opening and closing parenthesis, and the "monoidal automaton" over this polygraph with $Q = \{S,M\}$, $i = f = S$, and transitions shown below, centre. An accepting run over this automaton is shown below, right. The string diagram accepted by this run is what we obtain by erasing the states from this picture.
  \vspace*{-3mm}
  \begin{figure}[H]
      \centering
      \includegraphics[scale=0.5]{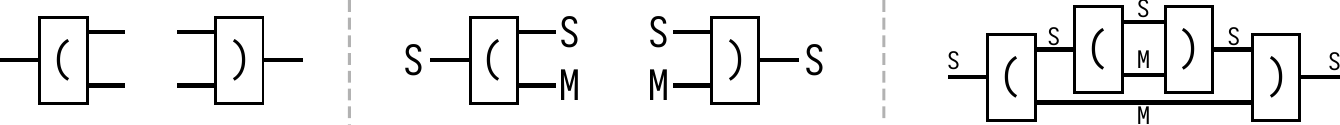}
    \end{figure}
    \vspace*{-4mm}
    It is clear that the language accepted by this automaton is exactly the ``balanced parentheses'', but note that this is not a language of \emph{words}, since we use an extra string to keep track of opening and closing parentheses. This principle will play an important role in our representation theorem in \Cref{sec:monoidal-cs}. Roughly speaking, this extra wire arises from the "optical contour" of a string language of balanced parentheses.
\end{example}

  \begin{example}
  In the field of DNA computing, Rothemund, Papadakis and Winfree demonstrated self-assembly of Sierpiński triangles from \emph{DNA tiles} \cite{10.1371/journal.pbio.0020424}. \Sobocinski{} and the first author \cite{earnshaw22} showed how to recast the tile model as a "regular monoidal language" over a "polygraph" containing two tile generators (white and grey), along with start and end generators, as in \Cref{fig:sierpinski}. Note that the start (end) generators have arity (coarity) 0, and hence effect a transition from (to) the empty word of states.
\end{example}

\begin{figure}[h]
  \centering\includegraphics[width=0.65\columnwidth]{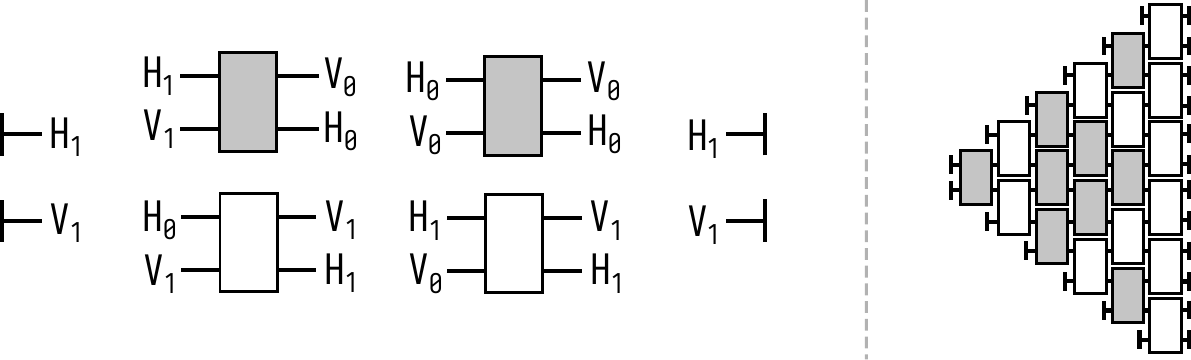}
  \caption{Transitions for the Sierpiński "monoidal automaton" (left) and an element of the language (right). The initial and final states are the empty word.}
  \label{fig:sierpinski}
\end{figure}

  Transitions of a "non-deterministic monoidal automaton" over $Γ$ extend inductively to string diagrams in $\fpro{Γ}$, giving functions $\hat{\delta}_{n,m} : Q^n × \fpro{Γ}(n,m) \to \mathscr{P}(Q^m)$ (\Cref{ax:transition}).

\begin{definition}
 A string diagram $s : n \to m$ in the "free monoidal category" $\fpro{Γ}$ over $Γ$ is in the language of a "non-deterministic monoidal automaton" $(\{\Delta_\gamma\}_{\gamma \in \Gamma}, i, f)$ if and only if $f \in \hat{\delta}_{n,m}(i, s)$. %
\end{definition}

\begin{definition} \label{defn:regmonlang}
  A "monoidal language" $L$ is a ""regular monoidal language"" if and only if there exists a "non-deterministic monoidal automaton" accepting $L$.
\end{definition}

The data of a "monoidal automaton" is a equivalent to a morphism of finite "polygraphs", which we call a \emph{"regular monoidal grammar"}, following Walters' \cite{walters1989} use of the term \emph{grammar} when data is presented as \emph{fibered} over an alphabet, and automata when the alphabet \emph{indexes} transitions as in \Cref{defn:ndmonaut}. We shall use this convenient presentation in the following.

\begin{definition} \label{defn:reg-mon-gram}
A ""regular monoidal grammar"" is a morphism of finite "polygraphs" $\psi : ℚ \to \Gamma$, equipped with finite initial and final sorts $i,f \in S_ℚ^{*}$. The morphisms in $\fpro{ℚ}(i,f)$ are derivations in the grammar, and their image under the free "monoidal functor" $\fpro{\psi}$ gives the language of the grammar; a subset of morphisms in $\fpro{Γ}$.
\end{definition}

\begin{restatable}[]{proposition}{autgram} \label{prop:reg-mon-as-morphism}
For every "non-deterministic monoidal automaton" there is a "regular monoidal grammar" with the same language, and vice-versa.
\end{restatable}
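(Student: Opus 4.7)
The plan is to exhibit a direct bijective correspondence between the data of a "non-deterministic monoidal automaton" over an alphabet $\Gamma$ and that of a "regular monoidal grammar" $\psi : \mathbb{Q} \to \Gamma$, and then to verify by induction on the structure of string diagrams that the two notions of accepted language coincide.

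From an automaton $(\Gamma, Q, \{\Delta_\gamma\}, i, f)$, I would build a "polygraph" $\mathbb{Q}$ whose sorts are the states $Q$ and whose "generators" are, for each $\gamma : n \to m$ in $\Gamma$, each $\vec q \in Q^n$, and each $\vec q' \in \Delta_\gamma(\vec q)$, a single generator $\gamma_{\vec q,\vec q'} : q_1 \otimes \cdots \otimes q_n \to q'_1 \otimes \cdots \otimes q'_m$. The "morphism of polygraphs" $\psi$ sends every sort to the unique sort of $\Gamma$ and each $\gamma_{\vec q,\vec q'}$ to $\gamma$, with initial and final sorts $i$ and $f$. Conversely, given $\psi : \mathbb{Q} \to \Gamma$, I would set $Q := S_{\mathbb{Q}}$ and define $\Delta_\gamma(\vec q)$ to consist of those vectors $\vec q'$ for which $\mathbb{Q}(\vec q; \vec q')$ contains some generator carried to $\gamma$ by $\psi$. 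These two constructions are mutually inverse up to relabelling of the fibres of $\psi$.

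To see that the accepted languages agree, I would prove by structural induction on diagrams $s \in \fpro{\Gamma}(n,m)$ that $\vec q' \in \hat\delta_{n,m}(\vec q, s)$ if and only if there exists $\tilde s \in \fpro{\mathbb{Q}}(\vec q, \vec q')$ with $\fpro{\psi}(\tilde s) = s$. The base cases of identity wires and single generators are immediate from the definitions of $\Delta$ and $\psi$. The cases for horizontal ($\otimes$) and vertical ($\circ$) composition follow from the strict monoidal functoriality of $\fpro{\psi}$ combined with the fact that $\hat\delta$ is defined compositionally on these same constructors; intuitively, a state-annotated run of the automaton on $s$ is precisely a lift of each box in $s$ to a generator of $\mathbb{Q}$ whose sources and targets agree on shared wires. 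Specialising to $\vec q = i$ and $\vec q' = f$ yields the required equality of languages.

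I do not expect a genuine technical obstacle: the content of the statement is really that the \emph{fibred} and \emph{indexed} presentations of this same data are equivalent, and any substance lives in the inductive step above. The only bookkeeping worth flagging is the extension beyond the single-sorted case of \Cref{defn:ndmonaut}: when $\Gamma$ carries several sorts, the state set must be partitioned by the fibres of $\psi$ over those sorts and the transition functions must be split accordingly, but this does not change the shape of the argument.
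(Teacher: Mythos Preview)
Your proposal is correct and follows essentially the same approach as the paper's proof sketch: build $\mathbb{Q}$ by taking one generator per transition in each $\Delta_\gamma$ and let $\psi$ forget the state labels. Your version is strictly more complete than the paper's, which stops at describing the data translation and says ``the reverse is analogous''; you additionally supply the inductive argument that the languages coincide and flag the multi-sorted bookkeeping, both of which the paper omits.
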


Not every "monoidal language" is a "regular monoidal language". The following is an example.

\begin{proposition} \label{prop:unbraid-reg}
  Let $Γ$ be the "polygraph" containing two generators: one for ``over-braiding'' \includegraphics[scale=0.27]{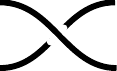} and one for ``under-braiding'' \includegraphics[scale=0.27]{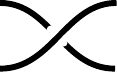}. The language of \emph{unbraids} on two strings over $Γ$, i.e. diagrams equivalent under planar isotopy to untangled strings, is not a "regular monoidal language".
\end{proposition}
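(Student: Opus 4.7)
The plan is to reduce the statement to the classical non-regularity of the ``equal-count'' language over a two-letter alphabet. First I would analyse the shape of $2 \to 2$ morphisms in $\fpro{\Gamma}$: since both generators of $\Gamma$ have arity and coarity $2$, every intermediate slice of such a string diagram has exactly two wires, so every morphism of type $2 \to 2$ is a purely sequential composition of the generators. Writing $\sigma$ for the overbraid and $\tau$ for the underbraid, this yields a bijection between $\fpro{\Gamma}(2,2)$ and the free monoid $\{\sigma, \tau\}^{\ast}$. Interpreting these words in the braid group $B_2 \cong \mathbb{Z}$, in which $\tau$ is inverse to $\sigma$, the unbraids on two strings correspond exactly to those words in which $\sigma$ and $\tau$ occur equally often.

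The main step is then the following observation. Suppose some "non-deterministic monoidal automaton" $(\{\Delta_{\gamma}\}_{\gamma \in \Gamma}, i, f)$ with state set $Q$ accepts the unbraid language; then necessarily $i, f \in Q^{2}$. Restricting attention to diagrams of fixed type $2 \to 2$ produces a classical non-deterministic finite automaton on the alphabet $\{\sigma, \tau\}$, with state set $Q^{2}$, transitions $\Delta_{\sigma}, \Delta_{\tau} \colon Q^{2} \to \mathscr{P}(Q^{2})$, initial state $i$, and unique accepting state $f$. By the inductive extension of $\hat{\delta}$ along sequential composition, the word language of this NFA is precisely the image of the unbraid language under the bijection above, namely the equal-count language $\{w \in \{\sigma, \tau\}^{\ast} : |w|_{\sigma} = |w|_{\tau}\}$. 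Applying the pumping lemma to the family $\sigma^{n} \tau^{n}$ shows that this language is not classically regular, contradicting the existence of the automaton.

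I do not anticipate any real obstacle. The one point worth verifying is the opening identification of $2 \to 2$ morphisms with words in the generators, which is immediate since every generator of $\Gamma$ consumes and produces both strands, leaving no room for a nontrivial tensor decomposition of a composite of type $2 \to 2$.
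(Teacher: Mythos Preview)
Your argument is correct. The reduction you describe is sound: since both generators are $2\to 2$ and the polygraph is single-sorted, every intermediate object in a $2\to 2$ diagram has width exactly $2$, so $\fpro{\Gamma}(2,2)$ is indeed the free monoid on $\{\sigma,\tau\}$; the unbraid condition in $B_2\cong\mathbb{Z}$ is precisely the equal-count condition; and the restriction of a monoidal automaton to this hom-set is literally an NFA on $Q^2$.

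The paper takes a closely related but not identical route: it invokes the pumping lemma for \emph{regular monoidal languages} (\Cref{lem:pump}) directly, with $k=2$, and then argues as for balanced parentheses. Your proof is essentially the specialisation of that monoidal pumping lemma to this case, made explicit: because every slice has width $2$, the monoidal pumping argument collapses to the classical one on the NFA with states $Q^2$. What you gain is a self-contained argument that does not need \Cref{lem:pump}; what the paper's route buys is uniformity with other non-regularity proofs in the monoidal setting.
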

\begin{proofsketch}
  We can use the pumping lemma for "regular monoidal languages" (\Cref{lem:pump}), with $k=2$. The argument is analogous to that for classical languages of balanced parentheses: every over-braiding or under-braiding must be eventually balanced with its opposite.
\end{proofsketch}

In the next section, we introduce \emph{context-free monoidal languages} and we shall see that unbraids fall in this class. In \Cref{sec:monoidal-cs} we prove a surprising representation theorem: every "context-free monoidal language" is the image under a "monoidal functor" of a "regular monoidal language".

\begin{remark}
  There is a minor difference between the definition of "regular monoidal languages" above, and those originally defined by \Sobocinski{} and the first author \cite{earnshaw22}, which were restricted to languages of scalar morphisms. In later work \cite{earnshaw_et_al:LIPIcs.MFCS.2023.43}, initial and final state finite languages were introduced. Bossut, Dauchet and Warin \cite{bossut} allow these to be regular languages; however, our applications require no more than a single initial and final word.
\end{remark}

\begin{remark}
  As defined, "regular monoidal languages" are subsets of \emph{free} strict "monoidal categories": we shall need only this case in order to prove our main theorem. "Context-free monoidal languages" will be defined over \emph{arbitrary} "strict monoidal categories", so this raises the question of extending the regular case to "monoidal categories" that are not free. We suggest this can be done by a generalization of \MZ{}'s definition of \emph{finite-state automata over a category} as finitary \emph{unique lifting of factorizations} functors \cite[Section 2]{mellies2}.
\end{remark}

\section{Context-Free Monoidal Languages} \label{sec:cfml}
We now turn our attention to context-free grammars over monoidal categories. The "multicategory of spliced arrows" is defined for any category. However, for categories equipped with a monoidal structure, it is natural to consider more general kinds of holes than allowed by the "spliced arrows" construction (\Cref{fig:monoidal-hole}). Rather than tuples of disjoint pieces, we should allow the possibility that a hole can be surrounded by strings. The necessity of considering these more general holes is forced upon us by various examples that could not be captured using "spliced arrows" (e.g. \Cref{ex:cftree,ex:hyperedge}). Proofs omitted from this section may be found in \Cref{ax:cfml}.

\subsection{The symmetric multicategory of diagram contexts}

"Context-free monoidal grammars" should contain productions from a variable to an incomplete diagram containing multiple variables or ``holes''.
This section constructs "diagram contexts" over an arbitrary "polygraph". "Diagram contexts" represent the incomplete derivation of a monoidal term: as such, they consist of string diagrams over which we add ``holes''. We shall notated these holes in string diagrams as pink boxes (e.g. \Cref{fig:monoidal-hole}).

\begin{figure}[h]
  \centering
  \includegraphics[scale=0.8]{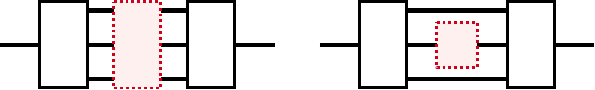}
  \caption{(Left) A "spliced arrow" is a tuple of morphisms. (Right) In a monoidal category, there is the possibility of more general holes, which do not split a morphism into disjoint pieces.}
  \label{fig:monoidal-hole}
\end{figure}

Substituting another diagram context inside a hole induces a symmetric multicategorical structure on the diagrams: symmetry means that we do not distinguish the specific order in which the holes appear. This allows us to avoid declaring a particular ordering of holes when defining a "context-free monoidal grammar".
The mathematical device that allows us to avoid declaring a particular ordering is \emph{shufflings}; their use in categorical logic is inspired by the work of Shulman \cite{shulman2016categorical}. 

\begin{definition}
  A shuffling of two lists, $Ψ ∈ \mathrm{Shuf}(Γ,Δ)$ is any list $Ψ$ that contains the elements of both $Γ$ and $Δ$ in any order but preserving the relative orders of $Γ$ and $Δ$.
\end{definition}

For instance, if $Γ = [\Ctx{x\vphantom{y}},\Ctx{y},\Ctx{z\vphantom{y}}]$ and $Δ = [\Ctx{u},\Ctx{v}]$, a shuffling is $Ψ = [\Ctx{x\vphantom{y}},\Ctx{u\vphantom{y}},\Ctx{y},\Ctx{z\vphantom{y}},\Ctx{v\vphantom{y}}]$, but not $[\Ctx{y},\Ctx{u\vphantom{y}},\Ctx{z\vphantom{y}},\Ctx{x\vphantom{y}},\Ctx{v\vphantom{y}}]$. The theory of diagram contexts will introduce a shuffling every time it mixes two contexts: this way, if a term was derived by combining two contexts, we can always reorder these contexts however we want. For instance, the term $\Ctx{u},\Ctx{v} ⊢ \Ctx{u} ⨾ \Ctx{v}$ was derived from composing the axioms $\Ctx{u} ⊢ \Ctx{u}$ and $\Ctx{v} ⊢ \Ctx{v}$; by choosing a different shuffling, we can also derive the term $\Ctx{v},\Ctx{u} ⊢ \Ctx{u} ⨾ \Ctx{v}$. Let us now formally introduce the theory.

\begin{definition}
  The theory of ""diagram contexts"" $\Ctx{\mathcal{P}}$ over a "polygraph", $\mathcal{P}$, is described by the following logic. This logic contains objects ($A,B,C,... ∈ \mathcal{P}_{obj}^{\ast}$) that consist of lists of types of the polygraph, $X,Y,Z,... ∈ \mathcal{P}_{obj}$; it also contains contexts $(\Gamma,\Delta,\Psi,...)$ that consist of lists of pairs of objects. Apart from the single variables $(x,y,z,..)$ and the generators of the "polygraph" $(f,g,h,...)$; we consider fully formed terms $(t_1,t_2,...)$.
  \begin{gather*}
    \inferrule[Identity]
      {\quad}
      {⊢ \mathrm{id} : \biobj{X}{X}} \qquad
     \inferrule[Generator]
      {\quad}
      {⊢ f : \biobj{X₁,...,Xₙ}{Y₁,...,Yₘ}} \qquad
    \inferrule[Hole]
      {\qquad}
      {\Ctx{x} : \biobj{A}{B} ⊢ \Ctx{x} : \biobj{A}{B}} \\
    \inferrule[Sequential]
      {Γ ⊢ t₁ : \biobj{A}{B} \quad Δ ⊢ t₂ : \biobj{B}{C}}
      {\mathrm{Shuf}(Γ;Δ) ⊢ t₁ ⨾ t₂ : \biobj{A}{C}} \qquad
    \inferrule[Parallel]
      {Γ ⊢ t₁ : \biobj{A₁}{B₁} \quad Δ ⊢ t₂ : \biobj{A₂}{B₂}}
      {\mathrm{Shuf}(Γ;Δ) ⊢ t₁ ⊗ t₂ : \biobj{A₁ ++ A₂}{B₁ ++ B₂}}
  \end{gather*}
  Every term in a given context has a unique derivation. We consider terms up to $\alpha$-equivalence and we impose the following equations over the terms whenever they are constructed over the same context: $(t₁ ⨾ t₂) ⨾ t₃ = t₁ ⨾ (t₂ ⨾ t₃); \quad t ⨾ \mathrm{id} = t; \quad t₁ \smallotimes (t₂ \smallotimes t₃) = (t₁ \smallotimes t₂) \smallotimes t₃; \quad (t₁ ⨾ t₂) \smallotimes (t₃ ⨾ t_4) = (t₁ \smallotimes t₃) ⨾ (t₂ \smallotimes t_4).$
\end{definition}

\begin{restatable}[]{proposition}{holessym}
  The multicategory of derivable sequents in the theory of "diagram contexts" is symmetric. In logical terms, exchange is admissible in the theory of "diagram contexts".
\end{restatable}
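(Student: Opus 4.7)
The plan is to prove admissibility of exchange by induction on the structure of derivations: if $\Psi \vdash t : \biobj{A}{B}$ is derivable and $\Psi'$ is any permutation of $\Psi$, then $\Psi' \vdash t : \biobj{A}{B}$ is derivable, and by a derivation of the same shape. The base cases (Identity, Generator, Hole) involve contexts of cardinality at most one, so there is nothing to do. For the Sequential step, the last rule has premises $\Gamma \vdash t_1 : \biobj{A}{B}$ and $\Delta \vdash t_2 : \biobj{B}{C}$ with conclusion $\Psi \vdash t_1 \mathbin{⨾} t_2$ for some $\Psi \in \mathrm{Shuf}(\Gamma, \Delta)$. Given a permutation $\Psi'$ of $\Psi$, define $\Gamma'$ to be the sublist of $\Psi'$ consisting of those entries originating from $\Gamma$, in the order they appear in $\Psi'$, and dually $\Delta'$. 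Then $\Gamma'$ is a permutation of $\Gamma$, $\Delta'$ of $\Delta$, and by construction $\Psi' \in \mathrm{Shuf}(\Gamma', \Delta')$. Applying the induction hypothesis to each premise and reapplying Sequential gives $\Psi' \vdash t_1 \mathbin{⨾} t_2$. The Parallel case is identical up to replacing $\mathbin{⨾}$ with $\otimes$ and the object types.

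To upgrade this to the statement that the multicategory of derivable sequents is symmetric, one reads off a $\Sigma_n$-action on the $n$-ary hom $\Ctx{\mathcal{P}}(\biobj{A_1}{B_1}, \ldots, \biobj{A_n}{B_n}; \biobj{X}{Y})$ from the exchange procedure above: $\sigma$ sends the sequent $[\biobj{x_1}{\cdot} : \biobj{A_1}{B_1}, \ldots, \biobj{x_n}{\cdot} : \biobj{A_n}{B_n}] \vdash t$ to the sequent in the $\sigma$-permuted context. Functoriality of the action (well-definedness, respect for composition in $\Sigma_n$) follows because exchange does not alter the term $t$ itself, only relabels its context. Equivariance of multicategorical composition with respect to this action follows from the evident naturality of shuffling under permutations, read off directly from the Sequential and Parallel rules once multicategorical composition is identified with substitution of sequents into hole variables.

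The one genuine point requiring care — and the place where the proof would fail without the shuffling rule — is the inductive step: when we decompose a permutation of $\Psi$ into induced permutations of $\Gamma$ and $\Delta$, we must still be left with a \emph{shuffling} rather than an arbitrary interleaving. This is however immediate from the definition: a shuffling is precisely an interleaving preserving the relative order of the two sublists, and $\Gamma'$, $\Delta'$ are by construction ordered as they appear in $\Psi'$. In other words, shufflings are the ``minimal'' device needed to make Sequential and Parallel cooperate with exchange, and once shufflings are built into the rules, exchange becomes admissible essentially for free.
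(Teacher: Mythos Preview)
Your argument is correct and rests on the same idea as the paper's: structural induction on derivations, with the shuffling clause in the \textsc{Sequential} and \textsc{Parallel} rules doing all the work. The organization differs slightly. The paper proves exchange only for an adjacent transposition $\Gamma,\Ctx{u},\Ctx{v},\Delta \vdash t \Rightarrow \Gamma,\Ctx{v},\Ctx{u},\Delta \vdash t$, and recurses up the derivation to the first rule instance at which $\Ctx{u}$ and $\Ctx{v}$ land in \emph{different} premises; at that point the swapped conclusion is already a shuffling of the \emph{unchanged} premise contexts, so no inductive hypothesis on the premises is needed. You instead handle an arbitrary permutation in one go by pushing it down into permutations of each premise context and invoking the inductive hypothesis on both. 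Your version is a little more uniform and directly yields the full $\Sigma_n$-action; the paper's version is marginally more economical in that the key step requires no appeal to the hypothesis, only a different choice of shuffling. Either way the content is the same.
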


\begin{restatable}[]{proposition}{holesmonoidal}
  Derivable sequents in the theory of "diagram contexts" over a polygraph $𝓟$ form the free strict monoidal category over the polygraph extended with special ``hole'' generators,
  $𝓟 + \{ h_{A,B} : A → B \mid A , B ∈ 𝓟_{obj}^{\ast} \}$. Derivable sequents over the empty context form the free "strict monoidal category" over the polygraph $𝓟$. 
  Moreover, there exists a symmetric multifunctor $i ፡ \Ctx{\U{\fpro{𝓟}}} → \Ctx{\vphantom{\U{}}𝓟}$ interpreting each monoidal term as its derivable sequent.
\end{restatable}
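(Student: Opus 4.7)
The plan is to exhibit a bijective correspondence between derivable sequents of the theory of diagram contexts over $\mathcal{P}$ and morphisms of the free strict monoidal category $\fpro{\mathcal{P}^+}$, where $\mathcal{P}^+ := \mathcal{P} + \{ h_{A,B} : A \to B \mid A, B \in \mathcal{P}_{obj}^{\ast}\}$. Concretely, I would show that each derivation of $\Gamma \vdash t : \biobj{A}{B}$ with $\Gamma = [\biobj{A_1}{B_1}, \ldots, \biobj{A_n}{B_n}]$ corresponds to a morphism $A \to B$ in $\fpro{\mathcal{P}^+}$ in which the hole generators $h_{A_i,B_i}$ each occur exactly once, in the left-to-right order recorded by $\Gamma$; passing to the monoidal category whose morphisms are derivable sequents, we further quotient by the admissible exchange rule established in the preceding proposition.

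First I would define the interpretation by induction on derivations: the rules \textsc{Identity}, \textsc{Generator}, \textsc{Sequential} and \textsc{Parallel} are sent to the corresponding constructors of $\fpro{\mathcal{P}^+}$, while each \textsc{Hole} axiom is sent to a hole generator $h_{A,B}$. The choice of shuffling in \textsc{Sequential} and \textsc{Parallel} leaves the underlying morphism unchanged; it only records the order of hole variables in the context. The equations explicitly imposed on the theory (associativity, unitality, interchange) are exactly the axioms of a strict monoidal category, so the interpretation descends to the quotient. For the inverse direction, I would proceed by induction on the structure of morphisms in $\fpro{\mathcal{P}^+}$, choosing fresh names for each occurrence of a hole generator and shufflings that match their canonical left-to-right order.

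The two consequences would then follow directly. A derivation over the empty context uses no \textsc{Hole} axiom, so its image lies in the sub-monoidal category generated by $\mathcal{P}$ alone, which is $\fpro{\mathcal{P}}$. For the symmetric multifunctor $i : \Ctx{\U{\fpro{\mathcal{P}}}} \to \Ctx{\mathcal{P}}$, an operation of $\Ctx{\U{\fpro{\mathcal{P}}}}$ is by definition a derivation whose atomic \textsc{Generator} axioms are themselves morphisms of $\fpro{\mathcal{P}}$; I would define $i$ inductively by replacing each such generator with its own derivation in $\Ctx{\mathcal{P}}$ obtained from the bijection above. Preservation of multicategorical substitution (plugging contexts into holes) and of the symmetric action on holes both follow from the admissibility of exchange.

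The main obstacle is the delicate interplay between shufflings and exchange: one must verify that the apparent asymmetry introduced by selecting a particular shuffling at every composition is exactly absorbed by exchange, so that the resulting structure is genuinely the free strict monoidal category over $\mathcal{P}^+$ rather than a more rigid variant that remembers ordering data. I would handle this by a joint induction, pairing each permutation of hole variables in the context with a systematic reshuffling of every rule application in the derivation, exploiting the fact that the underlying morphism in $\fpro{\mathcal{P}^+}$ is invariant under such reshufflings and that the monoidal equations already make composition and tensor flexible enough to absorb any residual rewrites.
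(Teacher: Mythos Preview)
Your proposal is correct and follows essentially the same approach as the paper: both proceed by structural induction, mapping the nullary rules \textsc{Identity}, \textsc{Generator}, \textsc{Hole} to identities, generators, and $h_{A,B}$ respectively, and the binary rules \textsc{Sequential}, \textsc{Parallel} to composition and tensor, then observing that the imposed equations are precisely those of a strict monoidal category. Your treatment is in fact more careful than the paper's own proof, which is quite terse and does not explicitly discuss the inverse direction or the shuffling/exchange interaction you flag as the main obstacle; the paper simply notes that the rules ``correspond'' to the algebraic theory of monoidal terms over $\mathcal{P}^+$ and invokes the tautological characterization of the free strict monoidal category.
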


\begin{remark}
  Various notions of ``holes in a monoidal category'' exist in the literature, under names such as \emph{optics}, \emph{contexts}, or \emph{wiring diagrams} \cite{Patterson_2021,roman21}. 
  Hefford and the authors \cite{roman23, produoidal23} gave a universal characterization of the \emph{produoidal} category of optics over a "monoidal category". This produoidal structure is useful for describing \emph{decompositions} of diagrams. The above logic generates a multicategory similar to the operad of directed, acyclic \emph{wiring diagrams} introduced by Patterson, Spivak and Vagner \cite{Patterson_2021}; whose operations are generic morphism shapes, rather than holes in a specific monoidal category.
\end{remark}

\begin{definition} \label{defn:symmulti}
  A ""symmetric multigraph"" is a "multigraph" $G$ equipped with bijections $σ^{\ast} ፡ G(X_1, ..., X_n; Y) \cong G(X_{σ(1)}, ..., X_{σ(n)}; Y)$ for every list $X_1,...,X_n$ of sorts and every permutation $σ$, satisfying $(σ · τ)^{\ast} = σ^{\ast} ⨾ τ^{\ast}$ and $\id{}^{\ast} = \id{}$. A ""morphism of symmetric multigraphs"" is a "morphism of multigraphs" which commutes with the bijections.
\end{definition}
\begin{definition}
  Every "multigraph", $M$, freely induces a "symmetric multigraph", $\mathsf{clique}(M)$, with the same objects and, for each $f ∈ M(X₁,...,Xₙ;Y)$, a clique of elements 
  $$f_{σ} ∈ \mathsf{clique}(M)(X_{σ(1)}, ..., X_{σ(n)}; Y),$$ connected by symmetries, meaning that $σ^{\ast}(f_{\tau}) = f_{σ \cdot τ}$. This is the left adjoint to the inclusion of "symmetric multigraphs" into "multigraphs".
\end{definition}  
\begin{remark}
  Given any "symmetric multigraph" $G$, finding a "multigraph" $M$ whose clique recovers it, $\mathsf{clique}(M) = G$, amounts to choosing a representative for each one of the cliques of the multigraph. Any "symmetric multigraph" can be (non-uniquely) recovered in this way: for each multimorphism $f ∈ G(X₁,...,Xₙ;Y)$, we can consider its orbit under the action of the symmetric group, $\mathsf{orb}(f) = \{σ^{\ast}(f) \mid \sigma \in S_n\}$ -- the orbits of different elements may coincide, but each element does have one -- and picking an element $g_o$ for each orbit, $o ∈ \{\mathsf{orb}(f) \mid f ∈ G\}$, recovers a multigraph giving rise to the original symmetric multigraph.
\end{remark}

\begin{definition}
  The theory of "diagram contexts" over a finitely presented "monoidal category", $(𝓟,𝓔,l,r)$ (\Cref{defn:finpres}), is the theory of "diagram contexts" over its generators, quotiented by its equations; in other words, it is the equalizer of the two projections of each equation, interpreted as derivable sequents $(\Ctx{l^{\ast}}⨾i),(\Ctx{r^{\ast}}⨾i) ፡ \Ctx{𝓔} → \Ctx{𝓟}$.
\end{definition}

\begin{proposition} \label{prop:hole-functor}
  The formation of "diagram contexts" in a "monoidal category" or "polygraph" extends to functors $\Ctx{\phantom{ℂ}} : \MonCat \to \MultiCat$ and $\Ctx{\phantom{ℂ}} : \PolyGraph \to \MultiGraph$, which moreover commute with the free multicategory $\fmult{}$ and free monoidal category functors $\fpro{}$.
\end{proposition}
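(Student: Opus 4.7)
The plan is first to verify functoriality of $\Ctx{-}$ on morphisms, and then to establish the commuting square by constructing an explicit natural isomorphism.

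\textbf{Functoriality.} For a morphism of polygraphs $f : \mathcal{P} \to \mathcal{P}'$, I define $\Ctx{f}$ by induction on derivations of the diagram context theory: each application of (Generator) with rule $g$ is replaced by (Generator) with rule $f(g)$, while the structural rules (Identity), (Hole), (Sequential) and (Parallel) propagate the substitution to sub-derivations. Well-definedness on equivalence classes follows because shufflings and the four imposed equations do not mention generators. An analogous inductive definition, using that any strict monoidal functor $F : \mathbb{C} \to \mathbb{D}$ preserves identity, composition and tensor, yields a multifunctor $\Ctx{F} : \Ctx{\mathbb{C}} \to \Ctx{\mathbb{D}}$; when the source is finitely presented, $F$ respects the equalizer defining the quotient, so $\Ctx{F}$ descends to it. Preservation of identity multimorphisms and multicategorical composition is immediate from the inductive definition.

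\textbf{Commuting square.} For each polygraph $\mathcal{P}$, I would construct a natural isomorphism of multicategories $\fmult{\Ctx{\mathcal{P}}} \cong \Ctx{\fpro{\mathcal{P}}}$. A multimorphism of $\Ctx{\fpro{\mathcal{P}}}$ is a derivation whose (Generator) axioms invoke morphisms of $\fpro{\mathcal{P}}$, that is, string diagrams over $\mathcal{P}$. Since each such string diagram is itself a closed (hole-free) derivation in $\Ctx{\mathcal{P}}$, the original derivation re-reads as a tree of $\Ctx{\mathcal{P}}$-derivations plugged into each other by (Sequential) and (Parallel), exactly a multimorphism of $\fmult{\Ctx{\mathcal{P}}}$. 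Conversely, a tree in $\fmult{\Ctx{\mathcal{P}}}$ collapses to a single derivation in $\Ctx{\fpro{\mathcal{P}}}$ by composing its leaves via the multicategorical operations inherited from the monoidal structure of $\fpro{\mathcal{P}}$. Naturality in $\mathcal{P}$ is direct from the inductive definition of $\Ctx{f}$.

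\textbf{Main obstacle.} The delicate point is to verify that this folding/unfolding correspondence is well-defined on the imposed equivalences of associativity, unitality, interchange, and shuffling. This reduces to the characterization (from the preceding proposition identifying derivable sequents with morphisms in the free strict monoidal category on $\mathcal{P}$ augmented with hole generators) that both sides of the desired isomorphism present the same free ``morphism-with-holes'' structure; the interchange equation of diagram contexts then matches that of the ambient monoidal structure, and reordering branches of a tree in $\fmult{\Ctx{\mathcal{P}}}$ corresponds to shuffling the context of the associated sequent.
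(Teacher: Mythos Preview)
The paper states this proposition without proof, so there is no reference argument to compare against; indeed, the paper never spells out what $\Ctx{\mathcal{P}}$ is \emph{as a multigraph} (as opposed to the symmetric multicategory of derivable sequents), nor in what precise sense the square ``commutes''. Your functoriality sketch is reasonable and would likely be accepted.

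Your commuting-square argument, however, has a genuine gap. You describe a multimorphism of $\fmult{\Ctx{\mathcal{P}}}$ as ``a tree of $\Ctx{\mathcal{P}}$-derivations plugged into each other by (Sequential) and (Parallel)'', but that is not what the free-multicategory construction produces: an element of $\fmult{M}$ is a formal tree whose nodes are operations of $M$ and whose edges record \emph{multicategorical} composition, i.e.\ substitution into holes --- not the Sequential/Parallel rules of the diagram-context logic. These two kinds of composition are orthogonal, and conflating them leaves your folding/unfolding map ill-defined.

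Worse, under the most natural reading of the multigraph $\Ctx{\mathcal{P}}$ (all diagram contexts over $\mathcal{P}$ as its operations), the isomorphism $\fmult{\Ctx{\mathcal{P}}} \cong \Ctx{\fpro{\mathcal{P}}}$ actually fails. Substituting the binary contexts $\Ctx{x_1}⨾\Ctx{x_2}$ and $\Ctx{x_3}⨾\Ctx{x_4}$ into the two holes of $\Ctx{u}\otimes\Ctx{v}$ yields, by the interchange equation, the same diagram context as substituting $\Ctx{x_1}\otimes\Ctx{x_3}$ and $\Ctx{x_2}\otimes\Ctx{x_4}$ into $\Ctx{u}⨾\Ctx{v}$; yet these are distinct formal trees in the free multicategory (they have different root operations). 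Your ``Main obstacle'' paragraph gestures at exactly this issue but resolves it by appeal to both sides presenting ``the same free structure'' --- which is precisely what this example shows cannot be the case for the free \emph{multicategory}. Whatever the paper intends by ``commute'' here, it is weaker than the natural isomorphism your argument tries to establish, and you would need to identify that weaker statement before the proof can go through.
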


At this point, the reader may doubt that the formation of "diagram contexts" has a left adjoint similar to the "contour" functor for "spliced arrows". Indeed, in order to recover a left adjoint, we shall need to introduce another multicategory of diagrams which we call "raw optics". This technical device will allow us to prove our main theorem (\Cref{thm:main}). However, let us first see the definition of context-free monoidal grammar, and some examples.

\subsection{Context-Free Monoidal Grammars}
We now have the ingredients for our central definition. A "context-free monoidal grammar" specifies a language of string diagrams by a collection of rewrites between \emph{diagram contexts}, where the non-terminals of a context-free grammar are now (labelled) \emph{holes} in a diagram (e.g. Figure \ref{fig:cfmg-example}). Our definition is entirely analogous to \Cref{defn:mzcfg}, but using our new symmetric multicategory of "diagram contexts" in a monoidal category, instead of "spliced arrows".

\begin{definition} \label{defn:cfmg}
  A ""context-free monoidal grammar"" over a strict monoidal category $(ℂ,\otimes,I)$ is a morphism of "symmetric multigraphs" $\Psi : \mathcal{G} \to |\Ctx{ℂ}|$, into the underlying multigraph of "diagram contexts" in $ℂ$, where $\mathcal{G}$ is finite, and a start sort $S_{X,Y} \in \Psi^{-1}(\biobj{X}{Y})$. %
\end{definition}

We shall use the notation $S \sqsubset \biobj{A}{B}$ to indicate that $\Psi(S) = \biobj{A}{B}$, following the convention in the literature \cite{mellies2}.

A "morphism of symmetric multigraphs" $\Psi : \mathcal{G} \to |\Ctx{ℂ}|$ defining a grammar uniquely determines, via the free-forgetful adjunction, a symmetric multifunctor $\hat{\Psi} : \fmult{\mathcal{G}} \to \Ctx{ℂ}$, mapping (closed) derivations to morphisms of $ℂ$. The language of a grammar is then defined analogously to \Cref{defn:mzcfglang}:

\begin{definition} \label{def:cfml}
  Let $(\Psi : \mathcal{G} \to |\Ctx{ℂ}|, S \sqsubset \biobj{A}{B})$ be a "context-free monoidal grammar". The language of $\Psi$ is the set of morphisms in $\mathbb{C}(A;B)$ given by the image under $\hat{\Psi}$ of the set of derivations $\fmult{\mathcal{G}}(;S)$. A set of morphisms $L$ in $ℂ$ is a ""context-free monoidal language"" if and only if there exists a "context-free monoidal grammar" whose language is $L$. %
\end{definition}

Let us see some examples.

\begin{example}[Classical context-free languages]
  Every "context-free monoidal grammar" of the following form is equivalent to a classical context-free grammar of words. Let $Γ$ be a single-sorted finite "polygraph" whose generators are all of arity and coarity 1. Then "context-free monoidal grammars" over $\fpro{Γ}$ with a start symbol $\phi(S) \sqsubset \biobj{1}{1}$ are context-free grammars of words over $Γ$. \Cref{fig:brackets} gives the classical example of balanced parentheses. Similarly, every context-free grammar of words may be encoded as a "context-free monoidal grammar" in this way.
\end{example}

\begin{figure}[h]
  \centering
  \includegraphics[width=\textwidth]{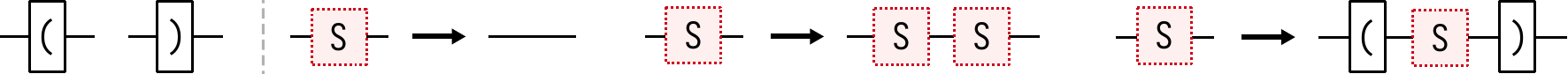}
  \caption{Balanced parentheses as a context-free monoidal grammar over the polygraph $Γ$ (left).}
  \label{fig:brackets}
\end{figure}

\begin{example}[Context-free tree grammars] \label{ex:cftree}
  \emph{Context-free tree grammars} \cite{10.5555/267871.267872,10.1145/800169.805428} are defined over \emph{ranked alphabets} of terminals and non-terminals, which amount to "polygraphs" in which the "generators" have arbitrary arity (the rank) and coarity $1$. Productions have the form $A(x_1, ..., x_m) \to t$ where the left hand side is a non-terminal of rank $m$ whose frontier is labelled by the variables $x_i$ in order, and whose right hand side is a tree $t$ built from terminals and non-terminals, and whose frontier is labelled by variables from the set $\{x_1, ..., x_m\}$. Note that $t$ may use the variables non-linearly.

  For example, let $S$ be a non-terminal with coarity $0$, $A$ a non-terminal with coarity $2$, $f$ a terminal of coarity $2$, and $x$ a terminal of coarity $0$ (a leaf). Then a possible rule over these generators is $A(x_1, x_2) \to f(x_1 , A(x_1,x_2))$, where $x_1$ appears non-linearly.
  In order to allow such non-linear use of variables in a "context-free monoidal grammar", we can consider the free \emph{cartesian} category over $Γ$.
  In terms of string diagrams, this amounts to introducing new generators for copying (\scalebox{0.7}{\raisebox{.5em}{\begin{tikzpicture}[scale=.75,baseline=(current bounding box.center)]
\comult
\end{tikzpicture}
}}) and deleting variables (\scalebox{0.7}{\raisebox{.25em}{\begin{tikzpicture}[xscale=.5,baseline=(current bounding box.center)]
\counit
\end{tikzpicture}}}), satisfying some equations which we recall in \Cref{ax:cartesian}.

  Let $Γ$ be a "polygraph" in which "generators" have arbitrary arity, and coarity $1$, as above.
  "Context-free monoidal grammars" over the free "cartesian category" on $Γ$, with a start symbol $S \sqsubset \biobj{0}{1}$ are context-free tree grammars.
  In \Cref{fig:tree} we extend the above data to a full example.
  Note that by allowing start symbols $S \sqsubset \biobj{0}{n}$, we can produce forests of $n$ trees.
\end{example}

\begin{figure}[h]
  \centering\includegraphics[width=0.8\columnwidth]{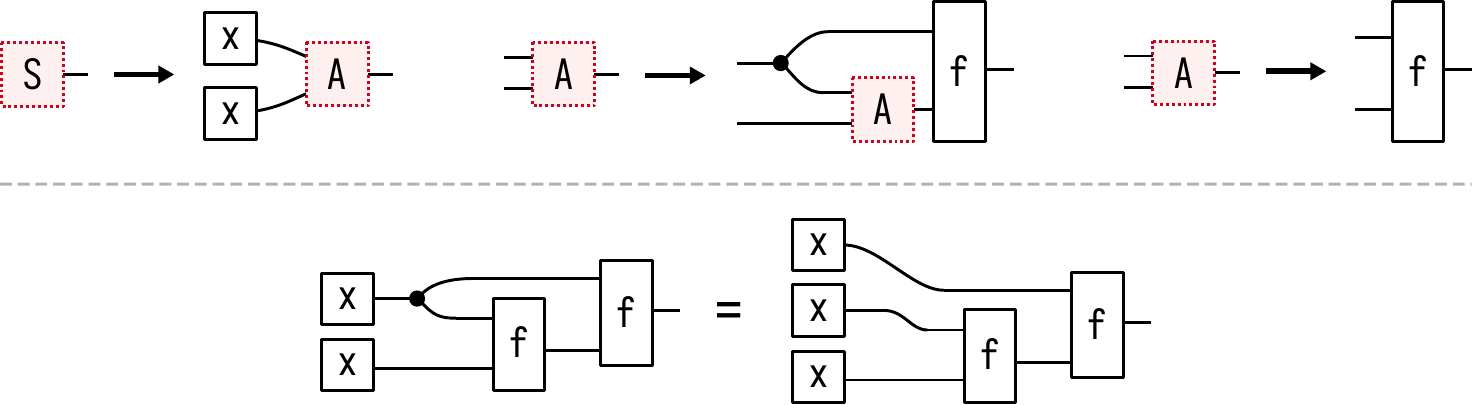}
  \caption{Example of a context-free tree grammar as a context-free monoidal grammar. The string diagrams at the bottom are \emph{equal} in the free \emph{cartesian} category over the polygraph of terminals.}
  \label{fig:tree}
\end{figure}

\begin{example}[Regular monoidal languages] \label{prop:reg-are-cf}
   "Regular monoidal languages" are "context-free monoidal languages": let $R = (\phi : G \to Γ, i, f)$ be a "regular monoidal grammar". We lift this to a "context-free monoidal grammar" using the diagram contexts functor (\Cref{prop:hole-functor}), giving $\Ctx{\phi} : \Ctx{G} \to \Ctx{Γ}$, taking the free monoidal functor $\fpro{\Ctx{\phi}}$ on this morphism of polygraphs, and finally commuting $\fpro{}$ and $\Ctx{\phantom{ℂ}}$ (\Cref{prop:hole-functor}). The pair $\biobj{i}{f}$ provides the start symbol.
\end{example}

\begin{example}[Hyperedge-replacement grammars] \label{ex:hyperedge}
  Hyperedge-replacement (HR) grammars are a kind of \emph{context-free graph grammar} \cite{Engelfriet1997}. We consider HR grammars in \emph{normal form} in the sense of Habel \cite[Theorem 4.1]{habel}. A production $N \to R$ of an HR grammar has $N$ a non-terminal with arity and coarity, and $R$ a hypergraph with the same arity and coarity (a \emph{multi-pointed hypergraph} in Habel's terminology), whose hyperedges are labelled by some finite set of terminals and non-terminals.

  Just as trees are morphisms in free \emph{cartesian} monoidal categories (\Cref{ex:cftree}), hypergraphs are the morphisms of monoidal categories equipped with extra structure, known as \emph{hypergraph categories} \cite{rosebrugh2005generic,10.1145/3502719,FONG20194746}. "Generators" in a "polygraph" are exactly directed hyperedges. The extra structure in a hypergraph category, which we recall in \Cref{ax:hypergraph}, amounts to a combinatorial encoding of patterns of wiring between nodes.

  Let $Γ$ be a "polygraph" of terminal hyperedges, $G$ a "multigraph" of non-terminal rules, and $S \in G$ a start symbol. Then "context-free monoidal grammars" $(G \to \U{\textsf{Hyp}[Γ]}, S)$ over the "free hypergraph category" on $Γ$ are exactly hyperedge replacement grammars over $Γ$ (e.g. \Cref{fig:hypergraph}). A hole in a morphism in $\textsf{Hyp}[Γ]$ is a placeholder for an $(n,m)$ hyperedge, the grammar labels these holes by non-terminals, and composition corresponds to hyperedge replacement.
\end{example}

\begin{figure}[h]
  \centering\includegraphics[width=\columnwidth]{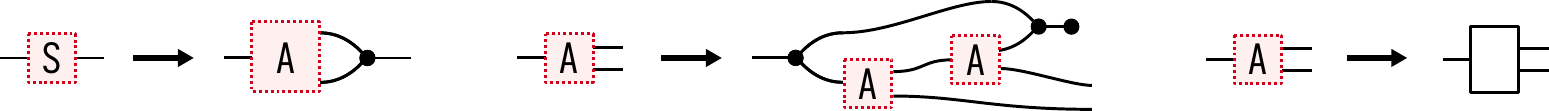}
  \caption{A hypergraph grammar for simple \emph{control flow graphs} with branching and looping, as a context-free monoidal grammar. Based on Habel \cite[Example 3.3]{habel}.}
  \label{fig:hypergraph}
\end{figure}

\begin{example}[Unbraids] \label{ex:unbraids}
  We return to the language of unbraids suggested in \Cref{prop:unbraid-reg}. Take the grammar over the over- and under-braiding "polygraph" depicted in \Cref{fig:cfmg-example}, with start symbol $S \sqsubset \biobj{2}{2}$. The language of this grammar consists of unbraids on two strings. %
\end{example}

\begin{figure}[h]
  \centering\includegraphics[width=\textwidth]{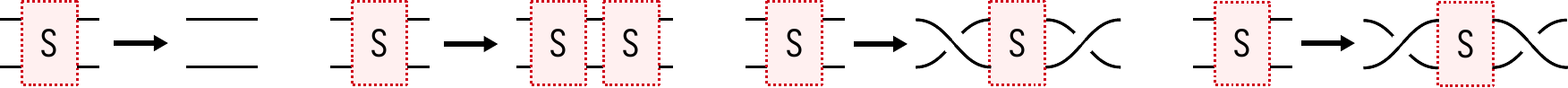}
  \caption{A context-free monoidal grammar of unbraids, with start symbol $S$.}
  \label{fig:cfmg-example}
\end{figure}

Let us record some basic closure properties of context-free monoidal languages.

\begin{proposition}
  Context-free monoidal languages over $ℂ$ with start symbol $S \sqsubset \biobj{X}{Y}$ are closed under union. The underlying morphism is given by the copairing, and start symbols can be unified by introducing a fresh symbol and productions where necessary, as in the classical case. Context-free monoidal languages are also closed under images of "strict monoidal functors": the underlying morphism is given by postcomposition. %
\end{proposition}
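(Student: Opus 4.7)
The plan is to prove each closure property by direct grammar construction, leveraging the functoriality of $\Ctx{\phantom{ℂ}}$ established in \Cref{prop:hole-functor} and the free--forgetful adjunction underlying \Cref{def:cfml}.

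\textbf{Union.} Let $\Psi_i : \mathcal{G}_i \to |\Ctx{ℂ}|$ for $i = 1,2$ be "context-free monoidal grammars" with start symbols $S_i \sqsubset \biobj{X}{Y}$. First I would form the coproduct $\mathcal{G}_1 + \mathcal{G}_2$ in the category of finite "symmetric multigraphs", computed sortwise and operation-wise so that finiteness is preserved; the universal property yields the copairing $[\Psi_1,\Psi_2] : \mathcal{G}_1 + \mathcal{G}_2 \to |\Ctx{ℂ}|$. I would then extend this multigraph by adjoining a fresh sort $S$ together with two unary operations $r_1 : S_1 \to S$ and $r_2 : S_2 \to S$, each sent by the extended morphism to the identity context $\mathrm{id} \in \Ctx{ℂ}(\biobj{X}{Y};\biobj{X}{Y})$. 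Every closed derivation of $S$ in the resulting free multicategory factors uniquely as some $r_i$ composed with a closed derivation of $S_i$; since $\Psi$ sends each $r_i$ to the identity, the image of the set of closed derivations of $S$ under the induced multifunctor is exactly $L(\Psi_1) \cup L(\Psi_2)$.

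\textbf{Image under a strict monoidal functor.} Let $F : ℂ \to \mathbb{D}$ be a "strict monoidal functor" and $\Psi : \mathcal{G} \to |\Ctx{ℂ}|$ a grammar with start $S \sqsubset \biobj{X}{Y}$. By \Cref{prop:hole-functor}, $F$ induces a "symmetric multifunctor" $\Ctx{F} : \Ctx{ℂ} \to \Ctx{\mathbb{D}}$. Postcomposing $\Psi$ with the underlying morphism of symmetric multigraphs gives a grammar $|\Ctx{F}| \circ \Psi : \mathcal{G} \to |\Ctx{\mathbb{D}}|$ with the same start sort, now of type $\biobj{FX}{FY}$. By naturality of the free--forgetful adjunction, the induced symmetric multifunctor is $\Ctx{F} \circ \hat{\Psi}$, and since $\Ctx{F}$ acts as $F$ on nullary multimorphisms (i.e., on morphisms of $ℂ$), the language of the new grammar is exactly $F(L(\Psi))$.

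No substantial difficulty arises: both constructions follow the classical template. The only point requiring explicit verification is that the coproduct-plus-fresh-symbol construction for union remains finite, which is immediate from the componentwise description of coproducts of symmetric multigraphs, so the new grammar still satisfies the finiteness condition of \Cref{defn:cfmg}.
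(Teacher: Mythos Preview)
Your proposal is correct and follows exactly the approach the paper indicates in the statement itself: copairing of the underlying morphisms with a fresh start symbol for union, and postcomposition with $|\Ctx{F}|$ for functorial images. The paper gives no separate proof beyond that sketch, so your write-up is a faithful expansion of it.
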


\section{Optical Contour of a Multicategory} \label{sec:contour}
An important realization of \MZ{} is that the formation of "spliced arrows" in a category has a left adjoint, which they call the \emph{contour} of a multicategory \cite[Section 3.2]{mellies2}. This adjunction is a key conceptual tool in their generalized version of the \CS{} representation theorem, and is closely linked to the notion of \emph{item} in LR parsing \cite{mellies2}. In this section, we present a similar adjunction for the monoidal setting. However, it is not clear that the formation of "diagram contexts" has a left adjoint. We must therefore first conduct a dissection of "diagram contexts" into \emph{raw optics}.

\subsection{The multicategory of raw optics}
A "raw optic" is a tuple of morphisms obtained by cutting a "diagram context" into a sequence of disjoint pieces (cf. \cite{prof2001-07488}). In \Cref{sec:contour} we shall see that "raw optics" has a left adjoint, the "optical contour", and this will be enough to prove our representation theorem (\Cref{thm:main}).

\begin{definition}\label{def:rawoptics}
  The ""multicategory of raw optics"" over a "strict monoidal category" $ℂ$, denoted $\raw[\mathbb{C}]$, is defined to have, as objects, pairs $\biobj{A}{B}$ of objects of $ℂ$, and, as its set of multimorphisms, $\raw[\mathbb{C}](\biobj{A_1}{B_1}, ..., \biobj{A_n}{B_n}; \biobj{S}{T})$, the following set
  \[\begin{aligned}
    \sum_{\mathclap{M_i, N_i \in ℂ}} \mathbb{C}(S; M_1 \!\smallotimes\! A_1 \!\smallotimes\! N_1) ×
    \prod_{i=1}^{n-1} \mathbb{C}(M_i \!\smallotimes\! B_i \!\smallotimes\!  N_i; M_{i+1} \!\smallotimes\! A_{i+1} \!\smallotimes\! N_{i+1})
× \mathbb{C}(M_n \!\smallotimes\! B_n \!\smallotimes\! N_n; T).
  \end{aligned}\]
  As a special case, $\raw[\mathbb{C}](;\biobj{S}{T}) := \mathbb{C}(S;T)$. 
  In other words, a multimorphism, from $\biobj{A_1}{B_1}, ..., \biobj{A_n}{B_n}$ to $\biobj{S}{T}$, consists of two families of objects, $M₁,…,Mₙ$ and $N₁,...,Nₙ$, and a family of functions, $(f_0, ..., f_n)$, with types $f₀ ፡ S → M_1 \smallotimes A_1 \smallotimes N_1$; with $fᵢ ፡ M_i \smallotimes B_i \smallotimes N_i \to M_{i+1} \smallotimes A_{i+1} \smallotimes N_{i+1};$ for each $1 \leq i \leq n-1$; and $fₙ ፡ M_n \smallotimes B_n \smallotimes N_n \to T$. In the special nullary case, we have a single morphism $f₀ ፡ S → T$.

  Identities are given by pairs $(\id{A}, \id{B})$. Given two "raw optics" $f = (f_0, ..., f_n)$ and $g = (g_0, ..., g_m)$, their composition is defined by 
$$f \comp_i g := (g_0, ..., g_i \comp (\id{} ⊗ f_0 ⊗ \id{}), ..., \id{} ⊗ f_i ⊗ \id{}, ..., (\id{} ⊗ f_n ⊗ \id{}) \comp g_{i+1}, ..., g_n).$$
\end{definition}

\begin{figure}
  \includegraphics[width=\textwidth]{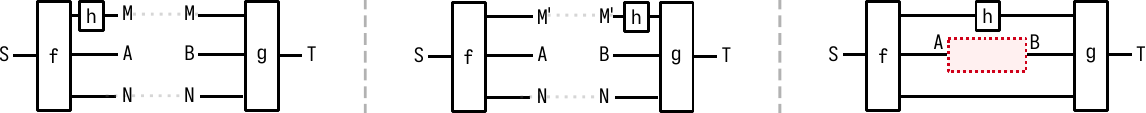}
  \caption{Two "raw optics" (left, centre) in $\raw[ℂ](\biobj{A}{B};\biobj{S}{T})$ which quotient to the same "diagram context". Note that a "raw optic" is not the same as a "spliced arrow": the types $M,N$ must match.}
  \label{fig:raw-optics}
\end{figure}

Every "raw optic" can be glued into a "diagram context", as illustrated in \Cref{fig:raw-optics}. More precisely we have,

\begin{proposition} \label{prop:raw-to-diag}
  There is an identity on objects multifunctor $q : \raw[ℂ] \to \Ctx{ℂ}$ mapping each "raw optic" to its corresponding diagram context. Equivalently, there is an identity on objects symmetric multifunctor $q^{\ast} ፡ \mathsf{clique}(\raw[ℂ]) → \Ctx{ℂ}$; this symmetric multifunctor is full.
\end{proposition}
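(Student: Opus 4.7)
The plan is to define $q$ on multimorphisms by \emph{gluing}: given a raw optic $(f_0,\ldots,f_n) \in \raw[ℂ](\biobj{A_1}{B_1},\ldots,\biobj{A_n}{B_n};\biobj{S}{T})$ with auxiliary objects $M_i,N_i$, send it to the diagram context
\[
q(f_0,\ldots,f_n) \;:=\; f_0 \fatsemi (\id_{M_1} \otimes x_1 \otimes \id_{N_1}) \fatsemi f_1 \fatsemi \cdots \fatsemi (\id_{M_n} \otimes x_n \otimes \id_{N_n}) \fatsemi f_n,
\]
derivable in the context $x_1:\biobj{A_1}{B_1},\ldots,x_n:\biobj{A_n}{B_n}$, with $q$ acting as the identity on objects. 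Preservation of identities is immediate: $(\id_A,\id_B) \mapsto \id_A \fatsemi x \fatsemi \id_B = x$ in $\Ctx{ℂ}$. For preservation of multicategorical composition, I would expand $q(f \comp_i g)$ via the composition formula for $\raw[ℂ]$ and check that it coincides with the diagram context obtained by substituting $q(f)$ into the $i$-th hole of $q(g)$, using only associativity, unitality, and interchange in $ℂ$. The second formulation then follows from the adjunction $\mathsf{clique} \dashv U$ (extended to multicategories): $q$ transposes uniquely to an identity-on-objects symmetric multifunctor $q^{\ast}: \mathsf{clique}(\raw[ℂ]) \to \Ctx{ℂ}$.

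For fullness of $q^{\ast}$, I would induct on the derivation of a diagram context term $\Gamma \vdash t : \biobj{S}{T}$, producing a preimage under $q$ in some ordering of $\Gamma$, equivalently a preimage in $\mathsf{clique}(\raw[ℂ])$. The base cases are witnessed by the raw optics $(\id_X)$, $(\gamma)$ for a generator $\gamma$, and $(\id_A,\id_B)$ for a hole. For a sequential composition $t_1 \fatsemi t_2$, concatenate the two raw optics obtained by induction and collapse the meeting pair $f_n, g_0$ into the single morphism $f_n \fatsemi g_0$. For a parallel composition $t_1 \otimes t_2 : \biobj{A_1 \otimes A_2}{B_1 \otimes B_2}$, fix an ordering (say, all holes of $t_1$ before those of $t_2$) and tensor each piece with identities on the passive side: while traversing $t_1$'s holes, enlarge every $M_i,N_i$ by $A_2$ on the right; at the transition between factors, collapse $f_n \otimes \id_{A_2}$ and $\id_{B_1} \otimes g_0$ into one piece; thereafter enlarge every $M'_j,N'_j$ of $t_2$ by $B_1$ on the left. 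Because $\mathsf{clique}(\raw[ℂ])$ contains every permutation of a raw optic, the shuffling freedom of the parallel rule is witnessed automatically, and compatibility with the equations of $\Ctx{ℂ}$ (associativity, unitality, interchange) is automatic since they already hold in $ℂ$.

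The main obstacle is the mismatch between the shuffling freedom of the parallel rule in the logic of diagram contexts and the strictly ordered hole list of a raw optic. This asymmetry is precisely what the clique construction is designed to absorb; the essential bookkeeping lies in the inductive step for parallel composition, where one must route the passive tensor factors past the holes of the other side without breaking the decomposition $M \otimes A \otimes N$ required at each piece. Once such a routing has been carried out for one ordering, all other shufflings are obtained for free from the clique structure.
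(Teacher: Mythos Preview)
Your proposal is correct and follows essentially the same approach as the paper: the paper leaves the multifunctor part implicit (illustrated by \Cref{fig:raw-optics}) and gives the fullness argument as part of the proof of \Cref{lem:factor}, by the same structural induction on diagram-context terms. The only cosmetic difference is that the paper phrases the parallel case as ``apply the interchange law and note that whiskering a raw optic by an object returns again a raw optic,'' which is exactly the bookkeeping you spell out explicitly.
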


\begin{proposition}
  The construction of "raw optics" extends to a functor
  $\raw : \MonCat → \MultiCat$
  between the categories of "strict monoidal categories" and "strict monoidal functors", and "multicategories" and "multifunctors".
\end{proposition}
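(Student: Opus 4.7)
The plan is to verify functoriality by defining the action on morphisms and checking preservation of identities and composition. Since the action on objects is already implicit in the construction, and strictness of the monoidal functors makes the bookkeeping clean, this will be a routine verification.

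First, I would define the action on morphisms. Given a strict monoidal functor $F \colon \mathbb{C} \to \mathbb{D}$, I would define $\raw[F] \colon \raw[\mathbb{C}] \to \raw[\mathbb{D}]$ to send an object $\biobj{A}{B}$ to $\biobj{FA}{FB}$, and to send a multimorphism
\[(f_0, \ldots, f_n) \in \raw[\mathbb{C}](\biobj{A_1}{B_1}, \ldots, \biobj{A_n}{B_n}; \biobj{S}{T})\]
with intermediate objects $M_i, N_i$, to the tuple $(Ff_0, \ldots, Ff_n)$ with intermediate objects $FM_i, FN_i$. The crucial point is that strict monoidality of $F$ makes the types line up on the nose: $F(M_i \smallotimes B_i \smallotimes N_i) = FM_i \smallotimes FB_i \smallotimes FN_i$, so the source and target of $Ff_i$ are exactly what is required in $\raw[\mathbb{D}]$. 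In the nullary case, $\raw[F]$ simply sends $f_0 \colon S \to T$ to $Ff_0 \colon FS \to FT$.

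Next, I would verify that $\raw[F]$ is a multifunctor. Identity preservation is immediate: the identity on $\biobj{A}{B}$ in $\raw[\mathbb{C}]$ is $(\id_A, \id_B)$, which maps to $(F\id_A, F\id_B) = (\id_{FA}, \id_{FB})$, the identity on $\biobj{FA}{FB}$ in $\raw[\mathbb{D}]$. For multicategorical composition, I would unpack the composition formula
\[f \comp_i g = (g_0, \ldots, g_i \comp (\id \smallotimes f_0 \smallotimes \id), \ldots, \id \smallotimes f_i \smallotimes \id, \ldots, (\id \smallotimes f_n \smallotimes \id) \comp g_{i+1}, \ldots, g_n)\]
and apply $F$ componentwise. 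Using that $F$ is a strict monoidal functor (so that $F$ preserves both composition and the tensor product on the nose, including preserving identities in tensor products, e.g., $F(\id \smallotimes f_j \smallotimes \id) = \id \smallotimes Ff_j \smallotimes \id$), each term in the tuple goes to the corresponding term of $\raw[F](f) \comp_i \raw[F](g)$.

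Finally, I would check functoriality of $\raw$ itself: for composable strict monoidal functors $F \colon \mathbb{C} \to \mathbb{D}$ and $G \colon \mathbb{D} \to \mathbb{E}$, we have $\raw[G \circ F] = \raw[G] \circ \raw[F]$ componentwise, and $\raw[\id_\mathbb{C}] = \id_{\raw[\mathbb{C}]}$, both immediate from the componentwise definition. The only genuinely substantive point, and the place where any difficulty could arise, is the strictness assumption: if $F$ were merely a lax or strong monoidal functor, the intermediate objects $FM_i \smallotimes FA_i \smallotimes FN_i$ would only be isomorphic to $F(M_i \smallotimes A_i \smallotimes N_i)$, and one would have to insert coherence morphisms into each $Ff_i$, obstructing the direct definition. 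Since the statement is restricted to $\MonCat$ with strict monoidal functors, this obstacle does not arise and the verification is straightforward.
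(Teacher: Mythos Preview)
Your proposal is correct and is precisely the routine verification that the paper alludes to but declines to write out: the paper states this proposition without proof, and in the proof of \Cref{thm:opticadjoint} simply remarks that showing $\raw$ (and $\cont$) are functors ``although tedious, proceeds as expected and we prefer to omit it here.'' Your componentwise definition of $\raw[F]$ and the checks that strictness makes the types match on the nose are exactly what is needed.
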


\begin{remark}
  We could have defined "context-free monoidal grammars" as morphisms into "raw optics", rather than "diagram contexts", but this would require an arbitrary choice of "raw optic" for each rule, as in \Cref{fig:raw-optics}. In particular, this would force us to choose a particular ordering of the holes, since "raw optics" do not form a \emph{symmetric} multicategory. On the other hand, that such a choice exists will be needed to prove our representation theorem (\Cref{sec:monoidal-cs}).
\end{remark}

\subsection{Optical contour}
We now introduce the left adjoint to the formation of "raw optics", which we call the \emph{optical contour} of a multicategory. The difference from the contour recalled in \Cref{sec:prelims} is that additional objects $M_i, N_i$ are introduced which keep track of strings that might surround holes, and this gives rise to a "strict monoidal category".%

\begin{figure}[h]
  \centering
  \includegraphics[scale=0.5]{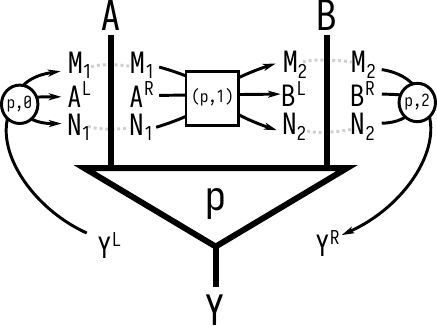}
  \caption{A multimorphism $p \in 𝕄(A,B;Y)$ and its three "sectors" given by "optical contour": $(p,0) : Y^L \to M_1⊗A^L⊗N_1, (p,1) : M_1⊗A^R⊗N_1 \to M_2⊗B^L⊗N_2, (p,2) : M_2⊗B^R⊗N_2 \to Y^R$.}
  \label{fig:optical-contour}
\end{figure}

\begin{definition}\label{def:opticalContour}
  Let $\mathbb{M}$ be a multicategory. Its ""optical contour"", $𝒞\mathbb{M}$, is the "strict monoidal category" presented by a "polygraph" whose generators are given by \emph{contouring} multimorphisms in $\mathbb{M}$. Each multimorphism gives rise to a set of generators for the monoidal category $𝒞\mathbb{M}$ -- its set of ""sectors"", as in \Cref{fig:optical-contour}.

  Explicitly, for each object $A \in \mathbb{M}$, the optical contour $𝒞\mathbb{M}$ contains a left polarized, $A^L$, and a right polarized, $A^R$, version of the object. Additionally, for each multimorphism $f \in \mathbb{M}(X_1, ..., X_n; Y)$, there exists a family of objects $M_1^f, ..., M_n^f, N_1^f, ..., N_n^f$, whose superscripts we omit when they are clear from context. The morphisms are given by the following generators. For each $f \in \mathbb{M}(X_1, ..., X_n; Y)$, we consider the following $n+1$ generators:
  \[\begin{aligned}
    & (f,0) : Y^{L} \to M_1^f \smallotimes X_1^L \smallotimes N_1^f, \\
    & (f,i) : M_i^f ⊗ X_i^R ⊗ N_i^f \to M_{i+1}^f ⊗ X_{i+1}^L ⊗ N_{i+1}^f, 
       \mbox{ for }1 \leqslant i\leqslant n-1,\mbox{ and } \\
    & (f,n) : M_n^f ⊗ X_n^R ⊗ N_n^f \to Y^R.
  \end{aligned}\]
  In particular, for a nullary multimorphism $f \in \mathbb{M}(;Y)$, we consider a generator $(f,0) : Y^L \to Y^R$. Further, we ask for the following equations which ensure that the optical contour preserves identities and composition: for all $x \in 𝕄$, $(\id{X}, 0) = \id{X^L}, (\id{X}, 1) = \id{X^R}$ with $M_1^{\id{X}} = N_1^{\id{X}} = I$; and given any $f \in 𝕄(X_1, ..., X_n; Y_i)$ and $g \in 𝕄(Y_1, ..., Y_m; Z)$,
   \[
    (f \comp_i g, j) = \begin{cases}
      (g,j) & j<i, \text{ with } M_j^{f\comp g}= M_j^g, N_j^{f \comp g} = N_j^g \\
      (g,i)\comp(\id{} ⊗ (f,0) ⊗ \id{}) & j = i, \text{ with } M_i^{f\comp g}= M_j^g ⊗ M_0^f \\
      \id{M_i^g}⊗(f,j-i)⊗\id{N_i^g} & i<j<i+n, \text{ with } M_j^{f\comp g}= M_i^g⊗M_{j-i}^f \\
      (\id{} ⊗ (f,n) ⊗ \id{})\comp(g,i+1) & j = i+n+1,  \text{ with } M_j^{f\comp g}= M_i^g⊗M_n^f \\
      (g,j-n) & j > i+n+1 \text{ with } M_j^{f\comp g}= M_{j-n}^g.
    \end{cases}
  \]
  In particular, when $f \in 𝕄(;Y_i)$ is nullary, $(f \comp_i g, 0) = g_i \comp f_0 \comp g_{i+1}$.
\end{definition}

\begin{restatable}[]{theorem}{opticadjoint} \label{thm:opticadjoint}
  "Optical contour" is left adjoint to "raw optics"; there exists an adjunction $(\mathcal{C} \dashv \raw{}) : "\MonCat" \to \MultiCat.$
\end{restatable}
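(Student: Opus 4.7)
The plan is to exhibit an explicit natural bijection
\[
\mathbf{MonCat}(\mathcal{C}\mathbb{M},\mathbb{C}) \;\cong\; \mathbf{MultiCat}(\mathbb{M},\raw[\mathbb{C}])
\]
by unpacking the presentation of $\mathcal{C}\mathbb{M}$ from \Cref{def:opticalContour} on one side and the definition of $\raw[\mathbb{C}]$ from \Cref{def:rawoptics} on the other. Since $\mathcal{C}\mathbb{M}$ is the "strict monoidal category" presented by the "polygraph" of "sectors" subject to identity and composition equations, a "strict monoidal functor" $F ፡ \mathcal{C}\mathbb{M} \to \mathbb{C}$ is uniquely determined by (i) a pair of objects $F(A^L), F(A^R) \in \mathbb{C}$ for every $A \in \mathbb{M}$; (ii) for each $f \in \mathbb{M}(X_1,\dots,X_n;Y)$, a choice of objects $F(M_i^f), F(N_i^f) \in \mathbb{C}$ and morphisms $F(f,i)$ of the appropriate types; subject to the identity and composition equations.

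First I would send a strict monoidal functor $F$ to the multifunctor $\widetilde{F} ፡ \mathbb{M} \to \raw[\mathbb{C}]$ defined on objects by $\widetilde{F}(A) := \biobj{F(A^L)}{F(A^R)}$ and on multimorphisms by $\widetilde{F}(f) := (F(f,0), F(f,1),\dots,F(f,n))$, using $F(M_i^f)$ and $F(N_i^f)$ as the middle objects. Inspection against \Cref{def:rawoptics} shows that the types match exactly: the image of the generator $(f,i)$ is a morphism $F(M_i^f) \otimes F(X_i^R) \otimes F(N_i^f) \to F(M_{i+1}^f) \otimes F(X_{i+1}^L) \otimes F(N_{i+1}^f)$, which is precisely an $i$-th component of a "raw optic" with outer boundaries $\biobj{F(X_1^L)}{F(X_1^R)}, \dots, \biobj{F(X_n^L)}{F(X_n^R)} \to \biobj{F(Y^L)}{F(Y^R)}$. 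Conversely, any multifunctor $G ፡ \mathbb{M} \to \raw[\mathbb{C}]$ provides exactly this data: two $\mathbb{C}$-objects per $A \in \mathbb{M}$, and a tuple of morphisms for each $f$ witnessing a raw optic; defining $F$ on the generating "polygraph" by these assignments yields a well-defined strict monoidal functor once the quotient equations are verified.

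Next I would verify that the identity and composition equations imposed in \Cref{def:opticalContour} correspond term-for-term with the identity and composition axioms of a multifunctor into $\raw[\mathbb{C}]$. The clauses $(\id{X},0)=\id{X^L}$ and $(\id{X},1)=\id{X^R}$ with $M_1^{\id{X}}=N_1^{\id{X}}=I$ say exactly that the tuple assigned to an identity of $\mathbb{M}$ is the pair $(\id{F(X^L)},\id{F(X^R)})$, which is the identity raw optic. The five-case formula for $(f\comp_i g, j)$ is an explicit transcription of the composition law $f \comp_i g$ of "raw optics" from \Cref{def:rawoptics}, where the images $F(f,\bullet)$ and $F(g,\bullet)$ are plugged together along the $i$-th boundary with frames $M_i^g, N_i^g$. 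Thus the equations in the presentation of $\mathcal{C}\mathbb{M}$ are satisfied in $\mathbb{C}$ if and only if $\widetilde{F}$ preserves identities and composition. This gives a bijection, and the two constructions are mutually inverse by direct inspection.

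The main obstacle is bookkeeping: making sure that the order of tensor factors (the left and right frames $M_i, N_i$, the polarized objects $X_i^L, X_i^R$, and the shuffles in composition) lines up precisely in both descriptions, and that the case analysis for $(f \comp_i g, j)$ really is the restriction of raw optic composition to the sector generators. Naturality in $\mathbb{C}$ is immediate, since postcomposing with a "strict monoidal functor" $\mathbb{C} \to \mathbb{C}'$ acts on tuples of morphisms componentwise, matching the functorial action of $\raw$ on morphisms. Naturality in $\mathbb{M}$ follows because any "multifunctor" $\mathbb{M}' \to \mathbb{M}$ precomposes with the sector assignment, which corresponds to the induced strict monoidal functor $\mathcal{C}\mathbb{M}' \to \mathcal{C}\mathbb{M}$ on the presentation.
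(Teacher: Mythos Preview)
Your proposal is correct and follows essentially the same approach as the paper: establishing the hom-set bijection $\mathbf{MonCat}(\mathcal{C}\mathbb{M},\mathbb{C}) \cong \mathbf{MultiCat}(\mathbb{M},\raw[\mathbb{C}])$ by observing that the generating data of a strict monoidal functor out of the presented category $\mathcal{C}\mathbb{M}$ coincide exactly with the data of a multifunctor into $\raw[\mathbb{C}]$, with the identity/composition equations on sectors matching the multifunctor axioms. In fact, your argument is more detailed than the paper's, which simply records the matching of object and multimorphism data and notes that the contour was defined precisely to make this correspondence hold.
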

\begin{proof}%
  See \Cref{ax:sec:opticadjoint}.
  \vspace*{-3mm}
\end{proof}

\section{A Monoidal Representation Theorem} \label{sec:monoidal-cs}

The Chomsky-Schützenberger representation theorem says that every context-free language can be obtained as the image under a homomorphism of the intersection of a Dyck language and a regular language \cite{chomsky63}. \MZ{} \cite{mellies2022parsing} use their splicing-contour adjunction to give a novel proof of this theorem for context-free languages in categories: the classial version is recovered when the category is a free monoid. The role of the Dyck language, providing linearizations of derivation trees, is taken over by \emph{contours} of "derivations".

Monoidal categories provide a more striking case: the Dyck language is not needed because the information that parentheses encode can be carried instead by tensor products. In this section, we show that a regular monoidal language of "optical contours" is sufficient to reconstruct the original language. \Cref{thm:main} states that \emph{every "context-free monoidal language" is the image under a "monoidal functor" of a "regular monoidal language"}.

Our strategy will be to first choose a factoring of a grammar into raw optics, then use the "optical contour"/"raw optics" adjunction to produce the required monoidal functor. We must first establish that such a factoring exists. Omitted proofs may be found in \Cref{ax:monoidal-cs-proofs}.

\begin{restatable}[]{lemma}{factor} \label{lem:factor}
  Any "morphism of symmetric multigraphs" underlying a "context-free monoidal grammar", $\phi : G \to \U{\Ctx{ℂ}}$, factors (non-uniquely) through the quotienting of "raw optics" (\Cref{prop:raw-to-diag}); meaning that there exists some "multigraph" $G'$ satisfying $G = \mathsf{clique}(G')$, and some morphism $\rawr{\phi} \colon G' → \U{\raw[ℂ]}$, such that  $\phi = \mathsf{clique}(\rawr{\phi}) ⨾ q^{\ast}$.
\end{restatable}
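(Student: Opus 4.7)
The plan is to construct the factorization generator-by-generator, using the fullness of $q^{\ast} : \mathsf{clique}(\raw[ℂ]) \to \Ctx{ℂ}$ from \Cref{prop:raw-to-diag}, together with the characterization of symmetric multigraphs as cliques of non-symmetric multigraphs given in the remark following \Cref{defn:symmulti}.

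First, I would partition the generators of $G$ into orbits under the symmetric group action. For each orbit $\mathcal{O}$, pick any starting representative $f \in \mathcal{O}$, say with $f \in G(X_1, \ldots, X_n; Y)$. The image $\phi(f) \in \Ctx{ℂ}(\phi X_1, \ldots, \phi X_n; \phi Y)$ admits, by fullness of $q^{\ast}$, a preimage in $\mathsf{clique}(\raw[ℂ])$. Unpacking the clique construction, this amounts to a raw optic $r \in \raw[ℂ]$ together with a permutation $\sigma$ such that $\sigma^{\ast}(q(r)) = \phi(f)$; equivalently, since $\phi$ is a morphism of symmetric multigraphs, $q(r) = \phi(f')$ where $f' := (\sigma^{-1})^{\ast}(f)$ is a different representative of the same orbit whose input arity coincides with that of $r$. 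This freedom to re-choose the orbit representative is exactly what the remark after \Cref{defn:symmulti} permits.

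With such a representative $f'$ fixed in each orbit, I define the multigraph $G'$ whose generators are these representatives (inheriting their sorts and arities from $G$). By the remark, $\mathsf{clique}(G') = G$. Define $\rawr{\phi} : G' \to \U{\raw[ℂ]}$ on each generator by $\rawr{\phi}(f') := r$, so that $q(\rawr{\phi}(f')) = \phi(f')$ holds by construction.

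Finally, I would verify the factorization on generators of $G = \mathsf{clique}(G')$. Every such generator has the form $\tau^{\ast}(f')$ for a unique representative $f' \in G'$ and a permutation $\tau$. Since $\mathsf{clique}(\rawr{\phi})$ is in the image of the clique functor, $q^{\ast}$ is a symmetric multifunctor, and $\phi$ is symmetric by hypothesis, all three morphisms commute with the action of $\tau^{\ast}$. Pulling out $\tau^{\ast}$ reduces the required equation on $\tau^{\ast}(f')$ to the already-established identity $q(\rawr{\phi}(f')) = \phi(f')$.

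The main subtlety is the alignment of input orderings: the permutation witnessing fullness need not match the input order of a pre-chosen orbit representative, and raw optics (unlike diagram contexts) do not admit an a priori reordering of their holes. This obstacle is overcome by deferring the choice of representative until after seeing the lift, which the remark allows since any section of the orbit projection gives a valid $G'$. The non-uniqueness asserted in the statement reflects both this choice of representatives and the freedom in choosing a raw-optic preimage within each hom-set.
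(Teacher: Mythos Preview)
Your proposal is correct and follows essentially the same two-step strategy as the paper: invoke fullness of $q^{\ast}$ and then unwind the clique construction via a choice of orbit representatives. The paper's proof differs only in emphasis: it spells out the structural induction establishing fullness (which you simply cite from \Cref{prop:raw-to-diag}) and then states the factorization through $\mathsf{clique}$ as a one-line general fact, whereas you give the detailed orbit-by-orbit construction of $G'$ and $\rawr{\phi}$, including the permutation-alignment subtlety that the paper leaves implicit.
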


Call the factor $\rawr{\phi} : G' \to \U{\raw[ℂ]}$ a ""raw representative"" of $\phi$. It amounts to choosing a fixed ordering of the holes in a "diagram context" for each rule in the grammar, and a particular splicing into a "raw optic".

\begin{restatable}[]{lemma}{rawlang} \label{lem:raw-lang}
  Let $\mathcal{G} = (\phi, S)$ be a "context-free monoidal grammar". Then the language of any "raw representative" $\rawr{\phi}$ of $\phi$ (with start symbol $S$) equals the language of $\mathcal{G}$. That is, $\rawr{\phi}[\fmult{G'}(;S)]  = \phi[\fmult{G}(;S)]$.
\end{restatable}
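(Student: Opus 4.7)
The plan is to prove $\widehat{\rawr{\phi}}[\fmult{G'}(;S)] = \widehat{\phi}[\fmult{G}(;S)]$ by mutual inclusion, exploiting one decisive observation: on \emph{nullary} multimorphisms, the three multicategories $\raw[\mathbb{C}]$, $\mathsf{clique}(\raw[\mathbb{C}])$ and $\Ctx{\mathbb{C}}$ all coincide with the hom-set $\mathbb{C}(A;B)$, and both $q$ and $q^{\ast}$ act as the identity there, since nullary input lists admit no nontrivial permutations. Thus, to compare the language of $\phi$ (evaluated in $\Ctx{\mathbb{C}}$) with the language of $\rawr{\phi}$ (evaluated in $\raw[\mathbb{C}]$), it suffices to transport closed derivations between the two free multicategories and observe the value at the root.

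\paragraph*{Inclusion $\supseteq$.}
Every derivation $d \in \fmult{G'}(;S)$ includes into $\fmult{G}(;S)$ via the unit of $\mathsf{clique} \dashv U$, which sends each $f \in G'$ to $f_{\mathrm{id}} \in \mathsf{clique}(G') = G$; call the image $\iota(d)$. By structural induction on $d$, $\widehat{\phi}(\iota(d)) = \widehat{\rawr{\phi}}(d)$ as morphisms of $\mathbb{C}$. At generators, the factorization hypothesis yields $\phi(f_{\mathrm{id}}) = q^{\ast}(\mathsf{clique}(\rawr{\phi})(f_{\mathrm{id}})) = q^{\ast}(\rawr{\phi}(f)_{\mathrm{id}}) = q(\rawr{\phi}(f))$; the inductive step uses multifunctoriality of $q$ to transport compositions in $\raw[\mathbb{C}]$ to compositions in $\Ctx{\mathbb{C}}$, and at the root the result is a morphism of $\mathbb{C}$ on which $q$ acts trivially.

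\paragraph*{Inclusion $\subseteq$ and main obstacle.}
Conversely, given any $d \in \fmult{G}(;S)$, we normalize it to lie in the image of $\iota$. Each node of $d$ is labelled by some $f_{\sigma} \in \mathsf{clique}(G')$ with $f_{\sigma} = \sigma^{\ast}(f_{\mathrm{id}})$; by equivariance of the free symmetric multicategory, $f_{\sigma}(t_1, \dots, t_n) = f_{\mathrm{id}}(t_{\sigma^{-1}(1)}, \dots, t_{\sigma^{-1}(n)})$. Pushing these permutations recursively from the root down to the leaves yields a tree of the form $\iota(\tilde{d})$ with all labels $f_{\mathrm{id}}$, for some $\tilde{d} \in \fmult{G'}(;S)$; this normalization is well-defined precisely because the derivation is closed, so no permutation is left stranded at an open input. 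Then $\widehat{\phi}(d) = \widehat{\phi}(\iota(\tilde{d})) = \widehat{\rawr{\phi}}(\tilde{d})$ by the previous inclusion, proving $d$'s value lies in the language of $\rawr{\phi}$. The main obstacle is making this normalization rigorous: one must verify that the equivariance axioms of the free symmetric multicategory on a clique indeed identify $d$ with $\iota(\tilde{d})$, a coherence check analogous to the standard fact that symmetric derivations on a clique-shaped signature are represented by ordinary non-symmetric trees.
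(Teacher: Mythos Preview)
Your proposal is correct and rests on the same core idea as the paper: the factorization $\phi = \mathsf{clique}(\rawr{\phi}) \comp q^{\ast}$ together with the observation that $q$ (and $q^{\ast}$) is the identity on nullary multimorphisms, so closed derivations land on the same morphism of $\mathbb{C}$ whether evaluated via $\widehat{\phi}$ or via $\widehat{\rawr{\phi}}$.

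The paper's own argument is only a two-line sketch (``these quotient to the original diagram contexts; in particular, closed derivations quotient to the same element of $\mathbb{C}$''), so your treatment is strictly more detailed rather than genuinely different. In particular, you make explicit something the paper leaves implicit: the mismatch between $\fmult{G}$ (free \emph{symmetric} on $\mathsf{clique}(G')$) and $\fmult{G'}$ (free \emph{ordinary} on $G'$), and why their sets of closed derivations over $S$ nonetheless have the same image. Your normalization argument for the $\subseteq$ direction---pushing the permutations $\sigma$ attached to each $f_{\sigma}$ down through the tree using equivariance until every node carries $f_{\mathrm{id}}$---is exactly the coherence check needed here, and it terminates cleanly because the derivation is closed. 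This is a genuine elaboration the paper does not spell out; it buys you a self-contained argument that does not rely on the reader already believing that ``free symmetric on a clique has the same closed terms as free ordinary on the base''.
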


\begin{restatable}[]{lemma}{uniquemonoidal} \label{lem:uniquemonoidal}
  A "raw representative" $\rawr{\phi} : G' \to \U{\raw[ℂ]}$ uniquely determines a "strict monoidal functor" $I_\phi : \fpro{(\cont{G'})} \to \mathbb{C}$.
\end{restatable}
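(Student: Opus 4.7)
The plan is a direct use of the universal property of the free strict monoidal category on a polygraph. First, I interpret $\cont{G'}$ as the polygraph whose sorts are polarized copies $X^L, X^R$ of the sorts of $G'$ together with fresh sorts $M^f_i, N^f_i$ for each generator $f \in G'(X_1,\ldots,X_n;Y)$, and whose generators are the $n+1$ sectors $(f,0),\ldots,(f,n)$ with the types prescribed in \Cref{def:opticalContour}. Then $\fpro{(\cont{G'})}$ is the free strict monoidal category on this polygraph, so strict monoidal functors out of it correspond bijectively to morphisms of polygraphs $\cont{G'} \to \U{ℂ}$.

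Next, I exhibit such a polygraph morphism from the raw representative. By \Cref{def:rawoptics}, a raw optic $\rawr{\phi}(f) \in \raw[ℂ](\biobj{A_1}{B_1},\ldots,\biobj{A_n}{B_n};\biobj{S}{T})$ consists of a choice of objects $M_i, N_i$ of $ℂ$ together with a tuple of morphisms $(f_0,\ldots,f_n)$ whose domains and codomains are arranged precisely to match the types of the sectors $(f,0),\ldots,(f,n)$, after setting $X^L := X$ and $X^R := X$. I therefore define $\cont{G'} \to \U{ℂ}$ on sorts by sending each $M^f_i$ to $M_i$ and each $N^f_i$ to $N_i$, and on generators by sending each sector $(f,i)$ to the $i$-th component morphism $f_i$.

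The universal property of $\fpro{-}$ then extends this polygraph morphism uniquely to a strict monoidal functor $I_\phi \colon \fpro{(\cont{G'})} \to ℂ$. Uniqueness holds at both levels: the polygraph morphism is forced by $\rawr{\phi}$ since the type constraints leave no freedom of assignment, and the extension to a strict monoidal functor is unique by freeness. I expect no genuine obstacle here; the lemma is essentially the component-wise unpacking, at the level of generators, of the adjunction $\cont{} \dashv \raw{}$ of \Cref{thm:opticadjoint}.
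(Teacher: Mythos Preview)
Your approach is correct and coincides with the paper's: after chaining the free--forgetful adjunction, the $\cont \dashv \raw$ adjunction of \Cref{thm:opticadjoint}, and the identification $\cont{(\fmult{G'})} \cong \fpro{(\cont{G'})}$ from \Cref{prop:comadj}, the paper unwinds $I_\phi$ on generators exactly as you do, namely $A^L \mapsto \pi_1(\rawr{\phi}(A))$, $A^R \mapsto \pi_2(\rawr{\phi}(A))$, $(f,i) \mapsto \pi_i(\rawr{\phi}(f))$. One small slip: your phrase ``setting $X^L := X$ and $X^R := X$'' does not type-check, since $X$ is a sort of $G'$ rather than an object of $ℂ$; the polarized sorts must be sent to the two components of the pair $\rawr{\phi}(X) \in ℂ_{\mathrm{obj}} \times ℂ_{\mathrm{obj}}$, which you omit from your description of the polygraph morphism on sorts.
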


We shall see that this "monoidal functor" maps the following "regular monoidal language" over $\cont{G}$ to the language of the original "context-free monoidal grammar" $\phi$.

\begin{definition} \label{defn:regular-representation}
  Let $\mathcal{G} = (\phi : G \to \U{\Ctx{ℂ}}, S)$ be a "context-free monoidal grammar", and $\rawr{\phi}$ a "raw representative" with domain $G'$. Define a ""regular representative"" of $\mathcal{G}$ to be the "regular monoidal grammar" $\mathcal{R} =  (\id{} ፡ \cont{G'} → \cont{G'}, S^L, S^R)$ over "optical contours" of $G'$ whose "morphism of polygraphs" is the identity. %
\end{definition}

\begin{restatable}[]{lemma}{bij} \label{lem:bij}
  Given a multigraph $G$, there is a bijection between "derivations" rooted at a sort $S$ and "optical contours" from $S^L$ to $S^R$, that is $\fmult{G}(;S) \cong \fpro{(\mathcal{C}G)}(S^L; S^R)$. %
\end{restatable}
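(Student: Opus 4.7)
The plan is to construct the bijection $\alpha \colon \fmult{G}(;S) \to \fpro{(\mathcal{C}G)}(S^L; S^R)$ explicitly by induction on derivations, and then to exhibit its inverse by parsing contour morphisms. A derivation rooted at $S$ has the form $d = f[d_1, \ldots, d_n]$ with $f \in G(X_1, \ldots, X_n; S)$ and $d_i \in \fmult{G}(;X_i)$; I would define
$$\alpha(d) \;=\; (f,0) \comp (\id{M_1^f} \otimes \alpha(d_1) \otimes \id{N_1^f}) \comp (f,1) \comp \cdots \comp (\id{M_n^f} \otimes \alpha(d_n) \otimes \id{N_n^f}) \comp (f,n),$$
which is well-defined by induction on tree height. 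The base case $n=0$ is a nullary generator $f \in G(;S)$, sent to $(f,0) \colon S^L \to S^R$.

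For the inverse $\beta$, I would show that every morphism $m \colon S^L \to S^R$ in $\fpro{(\mathcal{C}G)}$ admits a unique decomposition of this same form. The key observation is that the polygraph presenting $\mathcal{C}G$ introduces objects $M_i^f, N_i^f$ that are fresh per generator $f$ and per index $i$: they occur only in the sectors $(f, i-1)$ and $(f, i)$ of that same $f$, and nowhere else. Moreover, the only sectors with $S^L$ occurring in their domain are the leftmost sectors $(f,0)$ for $f \in G$ of coarity $S$. Hence any morphism $S^L \to S^R$ must begin with some $(f,0)$, whose output strands $M_1^f$ and $N_1^f$ can only be consumed by $(f,1)$ after traversing an inner sub-diagram typed $X_1^L \to X_1^R$; recursing gives a derivation $d_1 \in \fmult{G}(;X_1)$, and the argument iterates through the remaining sectors $(f,1), \ldots, (f,n)$.

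The main obstacle will be formalizing this decomposition rigorously in the presence of the monoidal interchange law, which a priori allows reordering independent portions of a string diagram. The freshness of the $M_i^f, N_i^f$ labels is precisely what rescues us: since they appear nowhere else in $\mathcal{C}G$, the fan of sectors attached to each generator $f$ remains topologically connected in every isotopy representative of $m$, forcing the decomposition to be unique. I would make this precise by inducting on the total number of sector generators occurring in $m$, appealing to the Joyal--Street coherence theorem for free strict monoidal categories to reduce to a canonical layered representative. A routine structural induction then checks that $\alpha \comp \beta$ and $\beta \comp \alpha$ are identities, completing the bijection.
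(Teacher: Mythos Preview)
Your approach is essentially the same as the paper's: both define the forward map by structural recursion, sending $f[d_1,\ldots,d_n]$ to the composite of the sectors $(f,0),\ldots,(f,n)$ interleaved with the recursively computed contours of the $d_i$, and both obtain the inverse by decomposing a morphism $S^L \to S^R$ according to its outermost fan of sectors. Your version is in fact more careful than the paper's on two points: you make the whiskering by $\id{M_i^f}\otimes(-)\otimes\id{N_i^f}$ explicit (the paper suppresses it, though the types require it), and you explicitly argue the uniqueness of the decomposition via the freshness of the $M_i^f,N_i^f$ sorts and planarity, whereas the paper simply asserts the decomposition exists.
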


\begin{restatable}[]{theorem}{main} \label{thm:main}
  The language of a "context-free monoidal grammar" $\mathcal{G} = (\phi : G \to \U{\Ctx{ℂ}},S)$ equals the image of a "regular representative" under the "monoidal functor" $I_\phi$ of \Cref{lem:uniquemonoidal}.
\end{restatable}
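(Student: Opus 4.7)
The plan is to derive the theorem as a direct consequence of the triangle identity for the adjunction $\mathcal{C} \dashv \raw{}$ of \Cref{thm:opticadjoint}, transported to freely generated (multi)categories. First, by \Cref{lem:raw-lang} it suffices to show that the image of the raw representative, $\widehat{\rawr{\phi}}[\fmult{G'}(;S)]$, agrees with $I_\phi[\fpro{(\cont{G'})}(S^L;S^R)]$, where $\widehat{\rawr{\phi}} \colon \fmult{G'} \to \raw[\mathbb{C}]$ is the multifunctor extending $\rawr{\phi}$. Note that the regular representative's language is exactly $\fpro{(\cont{G'})}(S^L;S^R)$, because its underlying polygraph morphism is the identity and $\fpro{\id} = \id$.

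The core step is to identify $I_\phi$ as the transpose of $\widehat{\rawr{\phi}}$ across $\mathcal{C} \dashv \raw{}$, after observing that $\cont(\fmult{G'}) \cong \fpro{(\cont{G'})}$ since both are left adjoints along the composite ``multigraph $\mapsto$ multicategory $\mapsto$ monoidal category''. Under this identification, the adjunction unit $\eta \colon \fmult{G'} \to \raw[\fpro{(\cont{G'})}]$ sends each multimorphism $f$ to the raw optic assembled from its contour sectors $(f,0), \dots, (f,n)$ glued by identities, and the triangle identity reads $\widehat{\rawr{\phi}} = \raw(I_\phi) \circ \eta$.

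The final piece is to recognise the bijection of \Cref{lem:bij} as the nullary component of $\eta$. By inspection, that bijection $\fmult{G'}(;S) \to \fpro{(\cont{G'})}(S^L;S^R)$ assembles a closed derivation's sectors into an optical contour from $S^L$ to $S^R$ exactly as $\eta$ does (with the wrapping objects $M_i,N_i$ collapsing to $I$ because the derivation is nullary), modulo the canonical identification $\raw[\fpro{(\cont{G'})}](;\biobj{S^L}{S^R}) = \fpro{(\cont{G'})}(S^L;S^R)$. Combined with the triangle identity at this nullary hom-set, where $\raw(I_\phi)$ acts as $I_\phi$ under $\raw[\mathbb{C}](;\biobj{A}{B}) = \mathbb{C}(A;B)$, this yields $\widehat{\rawr{\phi}}(d) = I_\phi(c_d)$ for every closed derivation $d$ with corresponding optical contour $c_d$, so the two images coincide.

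The main obstacle is verifying this last compatibility: that the bijection in \Cref{lem:bij} is truly the nullary component of the adjunction unit, and not merely another bijection between sets of the same cardinality. Once the contour-assembly procedure underlying \Cref{lem:bij} is exhibited as the unit's action on closed derivations, the theorem follows from \Cref{lem:raw-lang}, the triangle identity, and the fact that nullary multimorphisms in $\raw[\mathbb{C}]$ are just morphisms of $\mathbb{C}$.
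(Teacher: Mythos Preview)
Your proposal is correct and is essentially a conceptual repackaging of the paper's own argument. The paper's proof is direct: after reducing via \Cref{lem:raw-lang} to showing $\rawr{\phi}[\fmult{G'}(;S)] = I_\phi[\fpro{(\cont{G'})}(S^L;S^R)]$, it takes the explicit bijection $C_S$ of \Cref{lem:bij} and verifies $I_\phi(C_S(d)) = \rawr{\phi}(d)$ by appealing to the explicit description of $I_\phi$ given in the proof of \Cref{lem:uniquemonoidal} (namely $(f,i) \mapsto \pi_i(\rawr{\phi}(f))$). You instead recognise that $I_\phi$ is, by construction, the adjoint transpose of $\widehat{\rawr{\phi}}$ along $\cont \dashv \raw$ (composed with $\cont\fmult{G'} \cong \fpro{\cont G'}$), so that the paper's equality is the nullary instance of the triangle identity $\widehat{\rawr{\phi}} = \raw(I_\phi) \circ \eta$, once the bijection of \Cref{lem:bij} is identified with $\eta$ on closed derivations. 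This identification does hold: the equations imposed on sectors in \Cref{def:opticalContour} force $(d,0)$ for a composite $d$ to unfold to exactly the recursive formula defining $C_S$, so the one point you flag as an ``obstacle'' is genuinely routine. What you gain is an explanation of \emph{why} the paper's key equality $I_\phi(C_S(d)) = \rawr{\phi}(d)$ holds rather than a bare appeal to definitions; what the paper gains is that a reader can check the equality by hand without needing to unpack the unit. One small inaccuracy: for a nullary $d$ there are no residue objects $M_i,N_i$ at all, rather than residues that ``collapse to $I$''; the identification $\raw[-](;\biobj{S^L}{S^R}) = (-)(S^L;S^R)$ is simply the defining convention.
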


\Cref{thm:main} is at first quite surprising, since in comparison with the usual \CS{} theorem and its generalization \cite{mellies2}, one might expect to see an \emph{intersection} of a "regular monoidal language" and a "context-free monoidal language". Instead, this theorem tells us that "regular monoidal languages" are powerful enough to encode "context-free monoidal languages", even while the latter is strictly more expressive than the former. Just as a context-free grammar suffices to specify a programming language which may encode instructions for arbitrary computations, "regular monoidal languages" can specify arbitrary "context-free monoidal languages", with a "monoidal functor" effecting the ``compilation''.

\section{Conclusion}
There are still many avenues to explore in this structural approach to context-free languages. One obvious direction is to investigate a notion of pushdown automaton for context-free monoidal languages. In fact, it still remains to be elaborated how pushdown automata emerge for context-free languages over plain categories. Following the general principle of \emph{parsing as a lifting problem} \cite{mellies2}, and the duality of grammars (fibered) and automata (indexed) may provide some clue to characterizing such automata by a universal property.

The study of languages and the dependence relations that "diagram contexts" naturally present may be useful to the study of complexity in monoidal categories, such as the notion of ``monoidal width'' proposed by Di Lavore and \Sobocinski{} \cite{dilavore,DiLavoreS23}. Conversely, measures of monoidal complexity may inform the cost of parsing different terms.

Finally, different types of string diagram exist for a variety of widely applied categorical structures beyond monoidal categories, such as double categories \cite{myers2018string}. There are many opportunities to extend the general principle elaborated here to a notion of context-free language in these structures.

\bibliography{bibliography.bib}

\newcommand{\noop}[1]{}
\begin{thebibliography}{10}

\bibitem{abramsky2009categorical}
Samson Abramsky and Bob Coecke.
\newblock Categorical quantum mechanics.
\newblock In Kurt Engesser, Dov~M. Gabbay, and Daniel Lehmann, editors, {\em
  Handbook of Quantum Logic and Quantum Structures}, pages 261--323. Elsevier,
  Amsterdam, 2009.
\newblock URL:
  \url{https://www.sciencedirect.com/science/article/pii/B9780444528698500104},
  \href {https://doi.org/10.1016/B978-0-444-52869-8.50010-4}
  {\path{doi:10.1016/B978-0-444-52869-8.50010-4}}.

\bibitem{7174900}
Jirí Adámek, Robert~S.R. Myers, Henning Urbat, and Stefan Milius.
\newblock Varieties of languages in a category.
\newblock In {\em 2015 30th Annual ACM/IEEE Symposium on Logic in Computer
  Science}, pages 414--425, 2015.
\newblock \href {https://doi.org/10.1109/LICS.2015.46}
  {\path{doi:10.1109/LICS.2015.46}}.

\bibitem{blumensath2021algebraic}
Achim Blumensath.
\newblock Algebraic language theory for {Eilenberg}--{Moore} algebras.
\newblock {\em Logical Methods in Computer Science}, 17, 2021.

\bibitem{mmso}
Mikołaj Bojańczyk, Bartek Klin, and Julian Salamanca.
\newblock Monadic monadic second order logic, 2022.
\newblock URL: \url{https://arxiv.org/abs/2201.09969}, \href
  {https://doi.org/10.48550/ARXIV.2201.09969}
  {\path{doi:10.48550/ARXIV.2201.09969}}.

\bibitem{10.1145/3502719}
Filippo Bonchi, Fabio Gadducci, Aleks Kissinger, Pawel Sobocinski, and Fabio
  Zanasi.
\newblock String diagram rewrite theory {I}: Rewriting with {F}robenius
  structure.
\newblock {\em J. ACM}, 69(2), mar 2022.
\newblock \href {https://doi.org/10.1145/3502719} {\path{doi:10.1145/3502719}}.

\bibitem{bonchi:graphicalaffinealgebra2019}
Filippo Bonchi, Robin Piedeleu, Pawel Sobocinski, and Fabio Zanasi.
\newblock Graphical affine algebra.
\newblock In {\em 34th Annual {ACM/IEEE} Symposium on Logic in Computer
  Science, {LICS} 2019, Vancouver, BC, Canada, June 24-27, 2019}, pages 1--12.
  {IEEE}, 2019.
\newblock \href {https://doi.org/10.1109/LICS.2019.8785877}
  {\path{doi:10.1109/LICS.2019.8785877}}.

\bibitem{bossut}
Francis Bossut, Max Dauchet, and Bruno Warin.
\newblock A {Kleene} theorem for a class of planar acyclic graphs.
\newblock {\em Inf. Comput.}, 117:251--265, 03 1995.
\newblock \href {https://doi.org/10.1006/inco.1995.1043}
  {\path{doi:10.1006/inco.1995.1043}}.

\bibitem{chomsky63}
Noam Chomsky and Marcel-Paul Schützenberger.
\newblock The algebraic theory of context-free languages.
\newblock In P.~Braffort and D.~Hirschberg, editors, {\em Computer Programming
  and Formal Systems}, volume~35 of {\em Studies in Logic and the Foundations
  of Mathematics}, pages 118--161. Elsevier, 1963.
\newblock URL:
  \url{https://www.sciencedirect.com/science/article/pii/S0049237X08720238},
  \href {https://doi.org/10.1016/S0049-237X(08)72023-8}
  {\path{doi:10.1016/S0049-237X(08)72023-8}}.

\bibitem{prof2001-07488}
Bryce Clarke, Derek Elkins, Jeremy Gibbons, Fosco Loregi{\`{a}}n, Bartosz
  Milewski, Emily Pillmore, and Mario Rom{\'{a}}n.
\newblock Profunctor optics, a categorical update.
\newblock {\em CoRR}, abs/2001.07488, 2020.
\newblock URL: \url{https://arxiv.org/abs/2001.07488}, \href
  {https://arxiv.org/abs/2001.07488} {\path{arXiv:2001.07488}}.

\bibitem{courcelle_engelfriet_2012}
Bruno Courcelle and Joost Engelfriet.
\newblock {\em Graph Structure and Monadic Second-Order Logic: A
  Language-Theoretic Approach}.
\newblock Encyclopedia of Mathematics and its Applications. Cambridge
  University Press, 2012.
\newblock \href {https://doi.org/10.1017/CBO9780511977619}
  {\path{doi:10.1017/CBO9780511977619}}.

\bibitem{de-groote-2001-towards}
Philippe de~Groote.
\newblock Towards abstract categorial grammars.
\newblock In {\em Proceedings of the 39th Annual Meeting of the Association for
  Computational Linguistics}, pages 252--259, Toulouse, France, July 2001.
  Association for Computational Linguistics.
\newblock URL: \url{https://aclanthology.org/P01-1033}, \href
  {https://doi.org/10.3115/1073012.1073045}
  {\path{doi:10.3115/1073012.1073045}}.

\bibitem{booktraces}
Volker Diekert and Grzegorz Rozenberg.
\newblock {\em The Book of Traces}.
\newblock World Scientific, 1995.
\newblock \href {https://doi.org/10.1142/2563} {\path{doi:10.1142/2563}}.

\bibitem{produoidal23}
Matthew Earnshaw, James Hefford, and Mario Rom\'{a}n.
\newblock {The Produoidal Algebra of Process Decomposition}.
\newblock In Aniello Murano and Alexandra Silva, editors, {\em 32nd EACSL
  Annual Conference on Computer Science Logic (CSL 2024)}, volume 288 of {\em
  Leibniz International Proceedings in Informatics (LIPIcs)}, pages
  25:1--25:19, Dagstuhl, Germany, 2024. Schloss Dagstuhl -- Leibniz-Zentrum
  f{\"u}r Informatik.
\newblock URL:
  \url{https://drops.dagstuhl.de/entities/document/10.4230/LIPIcs.CSL.2024.25},
  \href {https://doi.org/10.4230/LIPIcs.CSL.2024.25}
  {\path{doi:10.4230/LIPIcs.CSL.2024.25}}.

\bibitem{earnshaw22}
Matthew Earnshaw and Pawe{\l} Soboci\'{n}ski.
\newblock Regular monoidal languages.
\newblock In Stefan Szeider, Robert Ganian, and Alexandra Silva, editors, {\em
  47th International Symposium on Mathematical Foundations of Computer Science
  (MFCS 2022)}, volume 241 of {\em Leibniz International Proceedings in
  Informatics (LIPIcs)}, pages 44:1--44:14, Dagstuhl, Germany, 2022. Schloss
  Dagstuhl -- Leibniz-Zentrum f{\"u}r Informatik.
\newblock URL: \url{https://drops.dagstuhl.de/opus/volltexte/2022/16842}, \href
  {https://doi.org/10.4230/LIPIcs.MFCS.2022.44}
  {\path{doi:10.4230/LIPIcs.MFCS.2022.44}}.

\bibitem{earnshaw_et_al:LIPIcs.MFCS.2023.43}
Matthew Earnshaw and Pawe{\l} Soboci\'{n}ski.
\newblock {String Diagrammatic Trace Theory}.
\newblock In J\'{e}r\^{o}me Leroux, Sylvain Lombardy, and David Peleg, editors,
  {\em 48th International Symposium on Mathematical Foundations of Computer
  Science (MFCS 2023)}, volume 272 of {\em Leibniz International Proceedings in
  Informatics (LIPIcs)}, pages 43:1--43:15, Dagstuhl, Germany, 2023. Schloss
  Dagstuhl -- Leibniz-Zentrum f{\"u}r Informatik.
\newblock URL: \url{https://drops.dagstuhl.de/opus/volltexte/2023/18577}, \href
  {https://doi.org/10.4230/LIPIcs.MFCS.2023.43}
  {\path{doi:10.4230/LIPIcs.MFCS.2023.43}}.

\bibitem{earnshaw24}
Matthew Earnshaw and Pawel Soboci\'{n}ski.
\newblock Regular planar monoidal languages.
\newblock {\em Journal of Logical and Algebraic Methods in Programming}, 2024.
\newblock In Press.

\bibitem{EILENBERG1967452}
Samuel Eilenberg and Jesse~B. Wright.
\newblock Automata in general algebras.
\newblock {\em Information and Control}, 11(4):452--470, 1967.
\newblock \href {https://doi.org/10.1016/S0019-9958(67)90670-5}
  {\path{doi:10.1016/S0019-9958(67)90670-5}}.

\bibitem{Engelfriet1997}
Joost Engelfriet.
\newblock {\em Context-Free Graph Grammars}, pages 125--213.
\newblock Springer Berlin Heidelberg, Berlin, Heidelberg, 1997.
\newblock \href {https://doi.org/10.1007/978-3-642-59126-6_3}
  {\path{doi:10.1007/978-3-642-59126-6_3}}.

\bibitem{FONG20194746}
Brendan Fong and David~I. Spivak.
\newblock Hypergraph categories.
\newblock {\em Journal of Pure and Applied Algebra}, 223(11):4746--4777, 2019.
\newblock URL:
  \url{https://www.sciencedirect.com/science/article/pii/S0022404919300489},
  \href {https://doi.org/10.1016/j.jpaa.2019.02.014}
  {\path{doi:10.1016/j.jpaa.2019.02.014}}.

\bibitem{fritz:synthetic}
Tobias Fritz.
\newblock A synthetic approach to {Markov} kernels, conditional independence,
  and theorems on sufficient statistics.
\newblock {\em CoRR}, abs/1908.07021, 2019.
\newblock URL: \url{http://arxiv.org/abs/1908.07021}, \href
  {https://arxiv.org/abs/1908.07021} {\path{arXiv:1908.07021}}.

\bibitem{10.5555/267871.267872}
Ferenc G\'{e}cseg and Magnus Steinby.
\newblock {\em Tree Languages}, page 1–68.
\newblock Springer-Verlag, Berlin, Heidelberg, 1997.

\bibitem{habel}
Annegret Habel.
\newblock {\em Hyperedge Replacement: Grammars and Languages}, volume 643 of
  {\em Lecture Notes in Computer Science}.
\newblock Springer, 1992.
\newblock URL: \url{https://doi.org/10.1007/BFb0013875}, \href
  {https://doi.org/10.1007/BFB0013875} {\path{doi:10.1007/BFB0013875}}.

\bibitem{Heindel2017TheCT}
Tobias Heindel.
\newblock The {C}homsky-{S}ch{\"u}tzenberger theorem with circuit diagrams in
  the role of words.
\newblock {\em Abstract}, 2017.

\bibitem{Heindel2017AMT}
Tobias Heindel.
\newblock A {Myhill-Nerode} theorem beyond trees and forests via finite
  syntactic categories internal to monoids.
\newblock {\em Preprint}, 2017.

\bibitem{heunen2019categories}
Chris Heunen and Jamie Vicary.
\newblock {\em Categories for Quantum Theory: an introduction}.
\newblock Oxford University Press, 2019.

\bibitem{hinze}
Ralf Hinze and Dan Marsden.
\newblock {\em Introducing String Diagrams: The Art of Category Theory}.
\newblock Cambridge University Press, 2023.

\bibitem{jeffrey1997premonoidal}
Alan Jeffrey.
\newblock Premonoidal categories and a graphical view of programs.
\newblock {\em Preprint, Dec}, pages 80688--7, 1997.

\bibitem{joyal91}
André Joyal and Ross Street.
\newblock The geometry of tensor calculus, {I}.
\newblock {\em Advances in Mathematics}, 88(1):55--112, 1991.
\newblock URL:
  \url{https://www.sciencedirect.com/science/article/pii/000187089190003P},
  \href {https://doi.org/10.1016/0001-8708(91)90003-P}
  {\path{doi:10.1016/0001-8708(91)90003-P}}.

\bibitem{kuhail2021characterizing}
Mohammad~Amin Kuhail, Shahbano Farooq, Rawad Hammad, and Mohammed Bahja.
\newblock Characterizing visual programming approaches for end-user developers:
  A systematic review.
\newblock {\em IEEE Access}, 9:14181--14202, 2021.

\bibitem{dilavore}
Elena~Di Lavore.
\newblock {\em Monoidal Width}.
\newblock PhD thesis, Tallinn University of Technology, 2023.

\bibitem{DiLavoreS23}
Elena~Di Lavore and Pawel Sobocinski.
\newblock Monoidal width.
\newblock {\em Log. Methods Comput. Sci.}, 19(3), 2023.
\newblock URL: \url{https://doi.org/10.46298/lmcs-19(3:15)2023}, \href
  {https://doi.org/10.46298/LMCS-19(3:15)2023}
  {\path{doi:10.46298/LMCS-19(3:15)2023}}.

\bibitem{leinster04}
Tom Leinster.
\newblock {\em Higher Operads, Higher Categories}.
\newblock London Mathematical Society Lecture Note Series. Cambridge University
  Press, 2004.
\newblock \href {https://doi.org/10.1017/CBO9780511525896}
  {\path{doi:10.1017/CBO9780511525896}}.

\bibitem{mellies2022parsing}
Paul-Andr{\'e} Melli{\`e}s and Noam Zeilberger.
\newblock Parsing as a {L}ifting {P}roblem and the
  {C}homsky-{S}ch{\"u}tzenberger {R}epresentation {T}heorem.
\newblock In {\em MFPS 2022-38th conference on Mathematical Foundations for
  Programming Semantics}, 2022.

\bibitem{mellies2}
Paul-Andr{\'e} Melli{\`e}s and Noam Zeilberger.
\newblock {The categorical contours of the Chomsky-Sch{\"u}tzenberger
  representation theorem}.
\newblock Preprint, December 2023.
\newblock URL: \url{https://hal.science/hal-04399404}.

\bibitem{myers2018string}
David~Jaz Myers.
\newblock String diagrams for double categories and equipments, 2018.
\newblock \href {https://arxiv.org/abs/1612.02762} {\path{arXiv:1612.02762}}.

\bibitem{Patterson_2021}
Evan Patterson, David~I. Spivak, and Dmitry Vagner.
\newblock Wiring diagrams as normal forms for computing in symmetric monoidal
  categories.
\newblock {\em Electronic Proceedings in Theoretical Computer Science},
  333:49–64, February 2021.
\newblock URL: \url{http://dx.doi.org/10.4204/EPTCS.333.4}, \href
  {https://doi.org/10.4204/eptcs.333.4} {\path{doi:10.4204/eptcs.333.4}}.

\bibitem{roman21}
Mario Román.
\newblock Open diagrams via coend calculus.
\newblock {\em Electronic Proceedings in Theoretical Computer Science},
  333:65–78, Feb 2021.
\newblock URL: \url{http://dx.doi.org/10.4204/EPTCS.333.5}, \href
  {https://doi.org/10.4204/eptcs.333.5} {\path{doi:10.4204/eptcs.333.5}}.

\bibitem{roman23}
Mario Román.
\newblock {\em Monoidal Context Theory}.
\newblock PhD thesis, Tallinn University of Technology, 2023.

\bibitem{rosebrugh2005generic}
Robert Rosebrugh, Nicoletta Sabadini, and Robert~F.C. Walters.
\newblock Generic commutative separable algebras and cospans of graphs.
\newblock {\em Theory and applications of categories}, 15:164--177, 2005.

\bibitem{10.1371/journal.pbio.0020424}
Paul W.~K Rothemund, Nick Papadakis, and Erik Winfree.
\newblock Algorithmic self-assembly of {DNA} {S}ierpinski triangles.
\newblock {\em PLOS Biology}, 2(12), 12 2004.
\newblock \href {https://doi.org/10.1371/journal.pbio.0020424}
  {\path{doi:10.1371/journal.pbio.0020424}}.

\bibitem{10.1145/800169.805428}
William~C. Rounds.
\newblock Context-free grammars on trees.
\newblock In {\em Proceedings of the First Annual ACM Symposium on Theory of
  Computing}, STOC '69, page 143–148, New York, NY, USA, 1969. Association
  for Computing Machinery.
\newblock \href {https://doi.org/10.1145/800169.805428}
  {\path{doi:10.1145/800169.805428}}.

\bibitem{rozenberg1997}
Grzegorz Rozenberg.
\newblock {\em Handbook Of Graph Grammars And Computing By Graph
  Transformation, Vol 1: Foundations}.
\newblock World Scientific Publishing Company, 1997.
\newblock URL: \url{https://books.google.ee/books?id=KwbtCgAAQBAJ}.

\bibitem{SEKI1991191}
Hiroyuki Seki, Takashi Matsumura, Mamoru Fujii, and Tadao Kasami.
\newblock On multiple context-free grammars.
\newblock {\em Theoretical Computer Science}, 88(2):191--229, 1991.
\newblock URL:
  \url{https://www.sciencedirect.com/science/article/pii/030439759190374B},
  \href {https://doi.org/10.1016/0304-3975(91)90374-B}
  {\path{doi:10.1016/0304-3975(91)90374-B}}.

\bibitem{selinger2011}
Peter Selinger.
\newblock A survey of graphical languages for monoidal categories.
\newblock In Bob Coecke, editor, {\em New Structures for Physics}, pages
  289--355. Springer Berlin Heidelberg, Berlin, Heidelberg, 2011.
\newblock \href {https://doi.org/10.1007/978-3-642-12821-9_4}
  {\path{doi:10.1007/978-3-642-12821-9_4}}.

\bibitem{shulman2016categorical}
Michael Shulman.
\newblock Categorical logic from a categorical point of view.
\newblock {\em Available on the web}, 2016.
\newblock URL: \url{https://mikeshulman.github.io/catlog/catlog.pdf}.

\bibitem{Thatcher1968}
James~W. Thatcher and J.~B. Wright.
\newblock Generalized finite automata theory with an application to a decision
  problem of second-order logic.
\newblock {\em Mathematical systems theory}, 2(1):57--81, Mar 1968.

\bibitem{TILSON198783}
Bret Tilson.
\newblock Categories as algebra: An essential ingredient in the theory of
  monoids.
\newblock {\em Journal of Pure and Applied Algebra}, 48(1):83--198, 1987.
\newblock URL:
  \url{https://www.sciencedirect.com/science/article/pii/0022404987901083},
  \href {https://doi.org/10.1016/0022-4049(87)90108-3}
  {\path{doi:10.1016/0022-4049(87)90108-3}}.

\bibitem{walters1989}
Robert~F.C. Walters.
\newblock A note on context-free languages.
\newblock {\em Journal of Pure and Applied Algebra}, 62(2):199--203, 1989.
\newblock \href {https://doi.org/10.1016/0022-4049(89)90151-5}
  {\path{doi:10.1016/0022-4049(89)90151-5}}.

\bibitem{zamdzhiev2016}
Vladimir Zamdzhiev.
\newblock {\em Rewriting Context-free Families of String Diagrams}.
\newblock PhD thesis, University of Oxford, 2016.

\end{thebibliography}

\newpage
\appendix
\section{Monoidal Categories} \label{ax:monoidals}
\begin{definition}
  A ""strict monoidal category"" $ℂ$ consists of a monoid of objects,
  or resources, $(ℂ_{obj}, ⊗, I)$, and a collection of morphisms, or processes, $ℂ(X; Y)$,
  indexed by an input $X ∈ ℂ_{obj}$ and an output $Y ∈ ℂ_{obj}$. A strict monoidal category
  is endowed with operations for the sequential and parallel composition of processes,
  respectively
  \begin{gather*}
    (⨾) ፡ ℂ(X; Y ) × ℂ(Y ; Z) → ℂ(X; Z), \\
    (⊗) ፡ ℂ(X; Y ) × ℂ(X'; Y') → ℂ(X ⊗ X'; Y ⊗ Y'),
  \end{gather*}
  and a family of identity morphisms, $id_{X} ∈ C(X; X)$. Strict monoidal categories
  must satisfy the following axioms.
  \begin{enumerate}
    \item Sequencing is unital, $f ⨾ \id{Y} = f$ and $\id{X} ⨾ f = f$.
    \item Sequencing is associative, $f ⨾ (g ⨾ h) = (f ⨾ g) ⨾ h$.
    \item Tensoring is unital, $f ⊗ \id{I} = f$ and $\id{I} ⊗ f = f$.
    \item Tensoring is associative, $f ⊗ (g ⊗ h) = (f ⊗ g) ⊗ h$.
    \item Tensoring and identities interchange, $\id{A} ⊗ \id{B} = \id{A⊗B}$.
    \item Tensoring and sequencing interchange,
    $$(f ⨾ g) ⊗ (f' ⨾ g') = (f ⊗ f') ⨾ (g ⊗ g').$$
  \end{enumerate}
\end{definition}

\begin{remark}
  This definition, slightly different from that found in most references, is taken from the thesis of Román \cite{roman23}. %
\end{remark}

\begin{definition} \label{defn:symmoncat}
  A ""symmetric strict monoidal category"" is a ""monoidal category"" equipped with a natural family of isomorphisms $\sigma_{X,Y} : X⊗Y \to Y⊗X$ for every pair of objects $X,Y$. We can extend string diagrams to express this structure, by allowing strings to cross without tangling. That is, we introduce components (below, left) for every pair of sorts, and equations (below, right) expressing that these are natural isomorphisms. Adding this structure to the free monoidal category over a polygraph presents the free symmetric monoidal category over that polygraph.

  \begin{figure}[h]
    \centering
    \includegraphics[width=\textwidth]{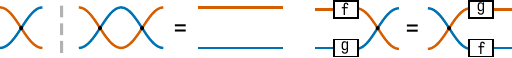}
    \end{figure}
\end{definition}

\begin{definition}
  A ""strict monoidal functor"", $F ፡ ℂ → 𝔻$, is a monoid homorphism between their objects, $F_{obj} ፡ ℂ_{obj} → 𝔻_{obj}$, and an assignment of morphisms $f ∈ ℂ(X;Y)$ to morphisms $F(f) ∈ 𝔻(FX; FY)$. A functor must preserve sequential composition, $F(f ⨾ g) = F(f) ⨾ F(g)$; parallel composition, $F(f ⊗ g) = F(f) ⊗ F(g)$; and identities, $F(\id{}) = \id{}$. Strict monoidal categories with strict monoidal functors form a category, ""\MonCat"". %
\end{definition}

\section{Pumping lemma for regular monoidal languages}
\begin{lemma}[\cite{earnshaw24}] \label{lem:pump}
  Let $L$ be a "regular monoidal language". Then $\forall k \in \mathbb{N}^{+}, \exists n$ such that for any $s \in L$ where $s$ may be factorized into $m \geqslant n$ non-identity morphisms $s = s_0\comp ... \comp s_i \comp ...\comp s_m$ where $s_i : k_i \to k_{i+1}$, with $1 \leqslant k_i \leqslant k$,
  there exists $i,j,\ell$ such that $k_i = k_j = \ell$ and $s' \comp (s'')^a \comp s''' \in L$ for all $a \geqslant 0$,
  where $s' = s_0 \comp ... \comp s_i$, $s'' = s_{i+1} \comp ... \comp s_j$, and $s''' = s_{j+1} \comp ... \comp s_m$
\end{lemma}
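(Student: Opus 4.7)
The plan is to reduce to the automaton model. Since $L$ is regular monoidal, by \Cref{defn:regmonlang} there exists a non-deterministic monoidal automaton $\mathcal{A} = (\{\Delta_\gamma\}_{\gamma \in \Gamma}, i, f)$ with finite state set $Q$ over some "polygraph" $\Gamma$ accepting $L$. The inductive extension of transitions $\hat{\delta}_{n,m} \colon Q^n \times \fpro{\Gamma}(n,m) \to \mathscr{P}(Q^m)$ respects sequential composition: if $s = s' \comp s''$ with $s' : n \to p$ and $s'' : p \to m$, then $q'' \in \hat{\delta}_{n,m}(q, s)$ iff there exists $q' \in Q^p$ with $q' \in \hat{\delta}_{n,p}(q, s')$ and $q'' \in \hat{\delta}_{p,m}(q', s'')$.

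Set $N := |Q| + |Q|^2 + \cdots + |Q|^k = \sum_{j=1}^{k} |Q|^j$, which bounds the number of state vectors of width between $1$ and $k$, and choose the pumping constant $n := N + 1$. Given any $s \in L$ together with a factorization $s = s_0 \comp s_1 \comp \cdots \comp s_m$ into non-identity morphisms with $s_r \colon k_r \to k_{r+1}$ and $m \geqslant n$, the fact that $s$ is accepted means $f \in \hat{\delta}(i,s)$. By repeatedly applying the decomposition property above, I obtain an accepting run, that is, a sequence of state vectors
\[
q_0 = i,\ q_1,\ q_2,\ \ldots,\ q_m,\ q_{m+1} = f, \qquad q_r \in Q^{k_r},
\]
such that $q_{r+1} \in \hat{\delta}_{k_r, k_{r+1}}(q_r, s_r)$ for every $r$.

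Now I apply the pigeonhole principle to the intermediate vectors $q_1, \ldots, q_{m+1}$: since each lives in $\bigcup_{j=1}^{k} Q^j$ (a set of size $N$) and there are $m+1 \geqslant N+1$ of them, two of these vectors coincide. This yields indices $i < j$ with $q_{i+1} = q_{j+1}$, which in particular forces $k_{i+1} = k_{j+1}$, say equal to $\ell$. Splitting the run at these two cuts gives:
\begin{itemize}
\item an initial run taking $i$ to $q_{i+1}$ along $s' := s_0 \comp \cdots \comp s_i$;
\item a loop taking $q_{i+1}$ back to itself along $s'' := s_{i+1} \comp \cdots \comp s_j$;
\item a final run taking $q_{j+1} = q_{i+1}$ to $f$ along $s''' := s_{j+1} \comp \cdots \comp s_m$.
\end{itemize}
Since the loop begins and ends in the same state vector, I can concatenate it with itself $a$ times for any $a \geqslant 0$ (including $a = 0$, by collapsing it entirely) and still obtain a valid run of $\mathcal{A}$ from $i$ to $f$. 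Hence $s' \comp (s'')^a \comp s''' \in L$ for all $a \geqslant 0$.

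The only mild obstacle is the bookkeeping showing that accepting runs lift through arbitrary sequential factorizations of $s$; this is handled by a straightforward induction on the length of the composition using the inductive definition of $\hat{\delta}$ (essentially \Cref{ax:transition}). The off-by-one between the lemma's stated equality $k_i = k_j$ and the interface widths $k_{i+1} = k_{j+1}$ needed for the composites $s' \comp (s'')^a \comp s'''$ to be well-typed is a notational matter; the pigeonhole argument on the $m$ intermediate state vectors produces a pair of matching interfaces either way.
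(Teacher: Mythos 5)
Your proof is correct and takes essentially the same approach as the paper's: both apply the pigeonhole principle to the finitely many state words of width at most $k$ occurring at the cut points of the factorization, with the same bound $\sum_{j\leq k}|Q|^j$ (the paper phrases this via the sorts of a regular monoidal grammar rather than the states of an automaton, but these are interchangeable by \Cref{prop:reg-mon-as-morphism}). Your write-up is in fact more detailed than the paper's, and your observation about the off-by-one in the indices $k_i$ versus $k_{i+1}$ is a fair reading of the statement.
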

\begin{proof}
Let L be the language of a grammar ($\phi : M \to \Gamma, I, F)$. If $L$ has a finite number of connected string diagrams, then for any $k$ take $n$ be longer than the longest factorization over all diagrams in $L$, then the lemma holds vacuously. Otherwise let $k$ be given, then take $n = \sum_{i=0}^k |S_M|^i$. Let $s \in L$, such that it has a factorization of the form above. Then by the pigeonhole principle, we will have $i,j,\ell$ as required in the lemma.
\end{proof}

\begin{lemma}[Contrapositive form] \label{lem:pumpc}
Let L be a language and suppose that $\exists k \in \mathbb{N}^{+}$ such that $\forall n$ there exists a morphism $w \in L$ that factorizes as in \Cref{lem:pump} and for all $i,j,\ell$ such that $k_i = k_j = \ell$, there exists an $a$ such that the pumped morphism $w'w''^{a}w''' \notin L$, then L is not regular monoidal.
\end{lemma}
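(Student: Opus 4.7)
The statement is the contrapositive of Lemma \ref{lem:pump}, so the plan is simply to make that contrapositive explicit. I would argue by contradiction: assume both that $L$ satisfies the hypothesis of the claim \emph{and} that $L$ is a regular monoidal language, and derive a contradiction with Lemma \ref{lem:pump}.

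First, I would fix the witness $k \in \mathbb{N}^+$ supplied by the hypothesis of the current statement. Since we are assuming $L$ is regular monoidal, Lemma \ref{lem:pump} applied to this $k$ produces an integer $n$ with the property that every $s \in L$ admitting a factorization into $m \geqslant n$ non-identity morphisms $s = s_0 \comp \dots \comp s_m$ with intermediate arities $k_i$ satisfying $1 \leqslant k_i \leqslant k$ has indices $i, j, \ell$ with $k_i = k_j = \ell$ for which $s' \comp (s'')^a \comp s''' \in L$ for all $a \geqslant 0$.

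Next, I would instantiate the hypothesis of the claim at this $n$ to obtain a morphism $w \in L$ whose factorization (of the same shape as in Lemma \ref{lem:pump}) has the opposite property: for \emph{every} triple $(i,j,\ell)$ with $k_i = k_j = \ell$, there is some $a$ with $w' \comp (w'')^a \comp w''' \notin L$. Since $w$ itself satisfies the premises of Lemma \ref{lem:pump}, applying that lemma to $w$ yields some $(i,j,\ell)$ for which all pumpings remain in $L$, contradicting the chosen $w$.

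There is no substantial mathematical obstacle here; the only care required is bookkeeping of the quantifier alternation, in particular ensuring that the phrase \emph{``factorizes as in Lemma \ref{lem:pump}''} in the hypothesis is read as ``admits a factorization of length at least $n$ into non-identity morphisms whose intermediate arities lie in $[1, k]$'', matching the premises of Lemma \ref{lem:pump} exactly so that the two statements mesh on the nose.
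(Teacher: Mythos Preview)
Your proposal is correct and matches the paper's treatment: the paper gives no explicit proof of this lemma at all, simply labelling it ``Contrapositive form'' of \Cref{lem:pump} and leaving the logical unpacking to the reader. Your argument spells out that contrapositive carefully, with the right quantifier bookkeeping, so there is nothing to add.
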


\begin{observation}
  This reduces to the pumping lemma for words and trees, taking $k=1$.
\end{observation}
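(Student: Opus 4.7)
The plan is to spell out how the generic factorization constraint $1 \leq k_i \leq k$ of \Cref{lem:pump} collapses at $k=1$ into the two classical pumping lemmas, one case at a time.

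For the word case, take $\Gamma$ to be a single-sorted polygraph whose generators all have arity and coarity $1$. As recorded in the first example of \Cref{sec:regular}, regular monoidal languages of scalar morphisms $s : 1 \to 1$ over such a $\Gamma$ coincide with classical regular languages over the alphabet $\Gamma$. Any non-identity factorization of such a word is automatically a composition of $1 \to 1$ morphisms, so every intermediate width $k_i$ equals $1$ and the constraint $1 \leq k_i \leq 1$ is vacuous. The existence clause then forces $\ell = 1$ and is trivially met, and the lemma delivers indices $i<j$ together with a pumpable factorization $s = s' \comp s'' \comp s'''$; this is precisely the classical $uvw$ pumping lemma for regular languages.

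For the tree case, take $\Gamma$ to be a single-sorted polygraph whose generators have arbitrary arity and coarity $1$, so that (as in the bottom-up tree automaton example of \Cref{sec:regular}) morphisms $0 \to 1$ in $\fpro{\Gamma}$ are exactly finite trees and the regular monoidal grammars recover bottom-up regular tree automata. A factorization $s = s_0 \comp \cdots \comp s_m$ of a tree $s : 0 \to 1$ all of whose intermediate widths are $1$ is precisely a decomposition along a single root-to-leaf path: the piece $s_0 : 0 \to 1$ is a leaf with its sibling subtrees attached on either side, and each subsequent $s_i : 1 \to 1$ is a tree context with one hole, adding a parent node above the hole together with its other children as sibling subtrees. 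The intermediate single wire carries the bottom-up state that the grammar assigns at the corresponding node of the path, so the pigeonhole step already carried out in the proof of \Cref{lem:pump} yields two positions of equal state, and the pumping operation $s' \comp (s'')^a \comp s'''$ duplicates that context in place, which is exactly the classical tree-pumping operation of iterating a context between two equi-stated nodes.

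The one substantive step is the identification of $1 \to 1$ morphisms in $\fpro{\Gamma}$ (with coarity-$1$ generators) with one-hole tree contexts. This follows by a routine induction on the construction of morphisms in the free strict monoidal category, noting that with only coarity-$1$ generators every composition step either extends the spine through the unique output wire or tensors in a leaf subtree alongside the hole; dually, a tree context with a distinguished hole admits a unique such decomposition along the path from hole to root. I expect this induction together with the bookkeeping of ``sibling subtrees'' to be the only point requiring care; once it is in hand, no new pumping bound needs computation, since the generic bound $n$ of \Cref{lem:pump} already specializes to the standard state-repetition threshold along a path, and the classical word and tree pumping lemmas drop out.
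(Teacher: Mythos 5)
Your argument is correct, and it is essentially the only reasonable reading of the observation: the paper states it without any proof, so your writeup supplies details the authors omit entirely. Two small points of precision. In the tree case, a width-one factorization of $s : 0 \to 1$ is a decomposition along a path from the root to an arbitrary node (not necessarily a leaf), with $s_0 : 0 \to 1$ the whole subtree rooted at that node rather than ``a leaf with sibling subtrees''; each subsequent $s_i : 1 \to 1$ is, as you say, a one-hole context, and your induction identifying $1 \to 1$ morphisms over a coarity-$1$ polygraph with one-hole contexts is the right (and only substantive) step. Second, what drops out is the weak form of both classical lemmas: \Cref{lem:pump} guarantees neither the $|xy| \leq n$ clause of the word pumping lemma nor the analogous ``pumped context within bounded distance of the root'' refinement for trees, and as literally stated it does not even force $i < j$ (though the pigeonhole argument in its proof clearly intends distinct positions). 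Neither point affects the substance of your reduction.
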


\section{Optical contour-splice adjunction} \label{ax:sec:opticadjoint}

\opticadjoint*
\begin{proof}
  Let $ℂ$ be a strict monoidal category and let $𝕄$ be a multicategory. We need first to prove that the two constructions involved, $\cont$ and $\raw$, are indeed functors -- this proof, although tedious, proceeds as expected and we prefer to omit it here.
  
  We will show that there is a bijection between strict monoidal functors $\cont{𝕄} → ℂ$ and multifunctors $𝕄 → \raw[ℂ]$.
  \begin{itemize}
    \item The objects of $\raw[ℂ]$ are pairs of objects. Mapping an object of the multicategory $X ∈ 𝕄$ to a pair of objects is the same as mapping two objects, $X^L$ and $X^R$, to the objects of the category $ℂ$.
    \item Mapping a multimorphism $f ∈ 𝕄(X_1,...,X_n;Y)$ to the multicategory of raw optics consists of choosing a family of functions $(f₀,...,f_n)$ together with two families of objects $M₁,..,Mₙ$ and $N₁,...,Nₙ$. This is the same choice we need to map each one of the components of the contour of $f ∈ 𝕄(X_1,...,X_n;Y)$ to that exact family of functions.
  \end{itemize}
  That is, we have only checked that, by construction, the maps out of the contour correspond with multifunctors to raw optics. The adjunction remains conceptually interesting because it links two concepts that have different conceptual interpretations, even if it can be reduced to note that one has been defined as the adjoint to the other.
\end{proof}

\section{Cartesian monoidal categories} \label{ax:cartesian}
The free ""cartesian category"" over a polygraph may be presented using string diagrams. As with "symmetric monoidal categories", we add some new generators and equations, to the effect that every object is equipped with a natural and uniform counital comagma structure. That is, in addition to symmetric structure, we add the following generators and equations.

\begin{figure}[h]
  \centering
  \includegraphics[width=0.8\textwidth]{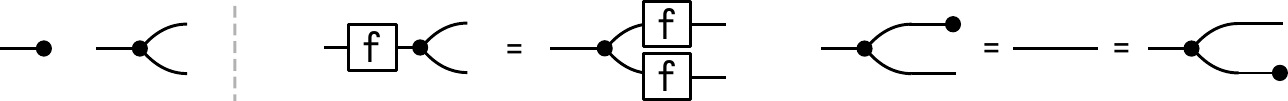}
\end{figure}

This structure must be uniform, in the sense that the structure on tensor products is given by tensor products of structure. See \cite[Section 4.1]{selinger2011} for more details.

\begin{remark}
  In most sources, cartesian categories are presented in terms of the presence of cocommutative comonoid structure. However, Román has shown that counital comagmas suffice \cite{roman23}.
\end{remark}

\section{Hypergraph categories} \label{ax:hypergraph}
The free ""hypergraph category"" over a polygraph may be presented using string diagrams. As with "cartesian monoidal categories", we add some new generators and equations. This extra structure amounts to equipping every object with the structure of a special commutative frobenius algebra. That is, in addition to symmetry we ask for the following generators and equations:

\begin{figure}[h]
  \centering
  \includegraphics[width=0.8\textwidth]{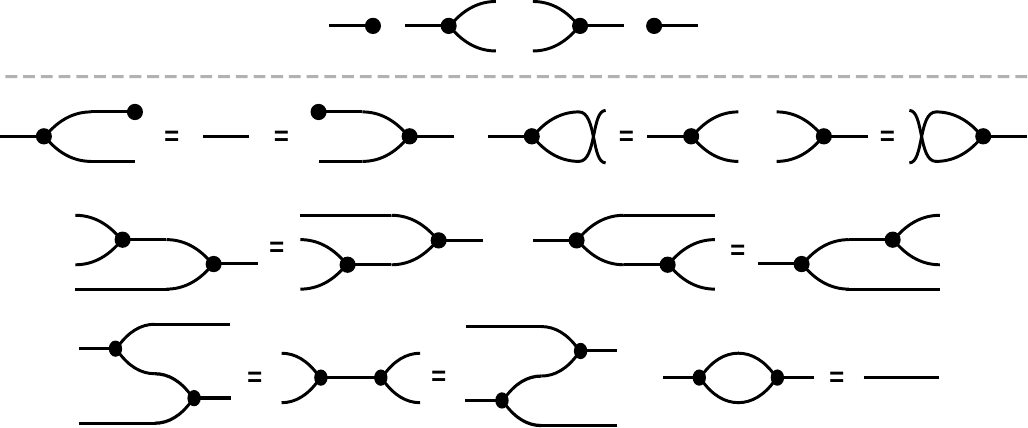}
\end{figure}

Moreover, this structure must be uniform in the sense that the structure on tensor products of objects is given by tensor products of the structure, see \cite{FONG20194746} for more details.

The morphisms of the free hypergraph category over a polygraph are in bijection with multi-pointed hypergraphs in the sense of Habel \cite[Definition 1.3]{habel}, that is, hypergraphs with ``open'' source and target boundaries \cite{10.1145/3502719}. Intuitively, string diagrams in a hypergraph category are hypergraphs.

\section{Details for \Cref{sec:regular}} \label{ax:regular}

\autgram*
\begin{proofsketch}
From the transitions $Δ_γ \subseteq Q^n \times Q^m$ of a "monoidal automaton", we can build a polygraph $ℚ$ by taking a generator $γ_i :  q_1 ⊗ ... ⊗ q_n \to q'_1 ⊗ ... ⊗ q'_m$ for each $((q_1, ..., q_n), (q'_1, ..., q'_m)) \in Δ_γ$. The morphism of polygraphs $\psi : ℚ \to Γ$ simply maps $γ_i$ to $γ$. The reverse is analogous.
\end{proofsketch}

\begin{definition} \label{ax:transition}
  Given a "non-deterministic monoidal automaton" over a "polygraph" $Γ$, we inductively define transition functions $\hat{\delta}_{n,m} : Q^n \times \fpro{Γ}(n,m) \to \mathscr{P}(Q^m)$ over string diagrams in the free monoidal category over $Γ$ with arity $n$ and coarity $m$ as follows:
  \begin{itemize}
  \item For a generator $γ \in Γ$, $\hat{\delta}_{n,m}(q,\gamma) := \delta_{n,m}(\gamma)$,
  \item for identities, $\hat{\delta}_{n,n}(q,\id{}) := \{q\}$,
  \item for a tensor product $s_1⊗s_2$, where $s_1 : n_1 \to m_1, s_2 : n_2 \to m_2$ with $n=n_1+n_2, m=m_1+m_2$ and $q=q_1{++}q_2$, $\hat{\delta}_{n,m}(q, s_1⊗s_2) := \{ p{++}p' \mid p \in \hat{\delta}_{n_1,m_1}(q_1,s_1), \, p' \in \hat{\delta}_{n_2, m_2}(q_2,s_2) \}$,
  \item for a composite $s;s'$, where $s : n \to p, s' : p \to m$, $\hat{δ}_{n,m}(q, s;s') := \{\hat{δ}_{p,m}(q', s') \mid q' \in \hat{δ}_{n,m}(q,s)\}$.
  \end{itemize}
\end{definition}

\section{Proofs omitted from \Cref{sec:cfml}} \label{ax:cfml}

\holessym*
\begin{proof}
  Assume we derived a term $Γ,\Ctx{u},\Ctx{v},Δ ⊢ t$; let us show we could also derive $Γ,\Ctx{v},\Ctx{u},Δ ⊢ t$. We proceed by structural induction, recurring to the first term where the variables $\Ctx{u}$ and $\Ctx{v}$ appeared at two sides of the rule: this rule must have been of the form $t₁ ⨾ t₂$ or $t₁ ⊗ t₂$ for $Γ₁,\Ctx{u},Δ₁ ⊢ t₁$ and $Γ₂,\Ctx{v},Δ₂ ⊢ t₂$, where $Γ ∈ \mathrm{Shuf}(Γ_1;Γ_2)$ and $Δ ∈ \mathrm{Shuf}(Δ_1;Δ_2)$. In that case, we can deduce that
  $Γ,\Ctx{v},\Ctx{u},Δ ∈ \mathrm{Shuf}(Γ₁,\Ctx{u},Δ₁; Γ₂,\Ctx{v},Δ₂)$;
  as a consequence, $Γ,\Ctx{v},\Ctx{u},Δ ⊢ t$ can be derived.
\end{proof}

\holesmonoidal*
\begin{proof}
  We proceed by structural induction. We first note that the three nullary rules of the logic correspond to terms of the free strict monoidal category over the polygraph $𝓟 + \{h_{A,B} \mid A , B ∈ 𝓟_{obj}^{\ast} \}$. The first corresponds to identities, the second corresponds to generators, and the third, when employed with types $A$ and $B$, corresponds to the additional generator $h_{A,B}$. We then note that the two binary rules correspond to sequential and parallel composition, thus obtaining the classical algebraic theory of monoidal terms over the polygraph $𝓟 + \{h_{A,B} \mid A , B ∈ 𝓟_{obj}^{\ast} \}$.
  
  Quotienting over the equations of monoidal categories, as we do when we impose the equations of the theory of diagram contexts, recovers the free strict monoidal category: in a tautological sense, the free strict monoidal category is precisely the one generated by the operations of a monoidal category quotiented by the axioms of a monoidal category. This contrasts sharply with a much more interesting description of the free strict monoidal category: that using string diagrams. As both are exhibited as satisfying the same universal property, they are necessarily equivalent.

  As a particular case, a derivable sequent over the empty context must, by structural induction, avoid any use of the holes. As a consequence of the previous reasoning, it is generated from the polygraph $𝓟$ and it must be a morphism of the free strict monoidal category.

  Finally, the symmetric multifunctor can be described by structural induction: it preserves identities, holes, sequential and parallel compositions, and it sends each monoidal term with no holes $h ∈ \U{\fpro{𝓟}}$ to its derivation under the empty context, $h ∈ \Ctx{𝓟}$.
\end{proof}

\section{Details from \Cref{sec:contour}}

\begin{proposition} \label{prop:comadj}
  The following square of adjunctions commutes.%
\[\begin{tikzcd}[ampersand replacement=\&,cramped]
	{\textsf{PolyGraph}} \& {\textsf{MultiGraph}} \\
	{\textsf{MonCat}} \& {\textsf{MultiCat}}
	\arrow[""{name=0, anchor=center, inner sep=0}, "\raw"', shift right=2, from=1-1, to=1-2]
	\arrow[""{name=1, anchor=center, inner sep=0}, "\fmult{}"', shift right=2, from=1-2, to=2-2]
	\arrow[""{name=2, anchor=center, inner sep=0}, "\fpro{}"', shift right=2, from=1-1, to=2-1]
	\arrow[""{name=3, anchor=center, inner sep=0}, "\raw"', shift right=2, from=2-1, to=2-2]
	\arrow[""{name=4, anchor=center, inner sep=0}, "U"', shift right=2, from=2-1, to=1-1]
	\arrow[""{name=5, anchor=center, inner sep=0}, "\cont"', shift right=2, from=1-2, to=1-1]
	\arrow[""{name=6, anchor=center, inner sep=0}, "U"', shift right=2, from=2-2, to=1-2]
	\arrow[""{name=7, anchor=center, inner sep=0}, "\cont"', shift right=2, from=2-2, to=2-1]
	\arrow["\dashv"{anchor=center}, draw=none, from=2, to=4]
	\arrow["\dashv"{anchor=center, rotate=-90}, draw=none, from=5, to=0]
	\arrow["\dashv"{anchor=center}, draw=none, from=1, to=6]
	\arrow["\dashv"{anchor=center, rotate=-90}, draw=none, from=7, to=3]
\end{tikzcd}\]
\end{proposition}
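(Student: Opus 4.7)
By uniqueness of adjoints, it suffices to exhibit a natural isomorphism between either the two composite left adjoints $\fpro{}\circ\cont$ and $\cont\circ\fmult{}$, or the two composite right adjoints $\raw\circ U$ and $U\circ\raw$, around the square. The plan is to attack the right adjoints, where the comparison will be essentially by construction.

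First I would nail down the ``top'' adjunction $\cont \dashv \raw$ between $\MultiGraph$ and $\PolyGraph$ by transporting \Cref{def:opticalContour,def:rawoptics} to the purely combinatorial setting. Concretely, $\cont G$ should be the polygraph with sorts $A^L, A^R$ for each sort $A$ of $G$ and auxiliary sorts $M_i^f, N_i^f$ for each multimorphism $f \in G$, together with sector generators $(f,i)$; and $\raw\, 𝓟$ should be the multigraph whose sorts are pairs of sorts of $𝓟$ and whose multimorphisms from $\biobj{A_1}{B_1},\ldots,\biobj{A_n}{B_n}$ to $\biobj{S}{T}$ are tuples $(M_i, N_i, g_0,\ldots,g_n)$ with $g_j$ a generator of $𝓟$ matching the raw-optic boundary type. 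The top adjunction will then follow from the combinatorial replica of the proof of \Cref{thm:opticadjoint}: a polygraph morphism $\cont G \to 𝓟$ consists of a choice of sort of $𝓟$ for every $A^{L/R}, M_i^f, N_i^f$ and a generator of $𝓟$ for every sector $(f,i)$, which is precisely the data of a multigraph morphism $G \to \raw\, 𝓟$.

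Next I would compare the right-adjoint composites. Fixing the convention that $U : \MonCat \to \PolyGraph$ sends $ℂ$ to the polygraph whose sorts are the objects of $ℂ$ and whose generators of type $[X_1,\ldots,X_n] \to [Y_1,\ldots,Y_m]$ are the morphisms $X_1 \otimes \cdots \otimes X_n \to Y_1 \otimes \cdots \otimes Y_m$ of $ℂ$ (and analogously for $U : \MultiCat \to \MultiGraph$), direct inspection would yield a literal identification $U(\raw[ℂ]) = \raw(U(ℂ))$: both are multigraphs whose sorts are the pairs of objects of $ℂ$ and whose multimorphisms are tuples $(M_i, N_i, f_0,\ldots,f_n)$ of objects and morphisms of $ℂ$ matching the raw-optic boundary. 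This identification is visibly natural in strict monoidal functors, so by uniqueness of left adjoints one concludes $\fpro{\cont G} \cong \cont\, \fmult{G}$ naturally in $G$.

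The main delicacy will be fixing the conventions for the forgetful functors precisely enough that the comparison in the second step is a literal equality rather than merely a moral one; once those conventions are pinned down, both the top adjunction and the cross-comparison reduce to routine reruns of the combinatorics already appearing in the proof of \Cref{thm:opticadjoint}.
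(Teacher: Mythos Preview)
Your proposal is correct and is a detailed execution of what the paper states in one line: ``This follows by unwinding definitions.'' Your strategy of checking only the right-adjoint square $U\circ\raw \cong \raw\circ U$ and invoking uniqueness of adjoints for the left-adjoint square is a clean way to organize that unwinding, though the paper does not spell out any such strategy.
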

\begin{proof}
  This follows by unwinding definitions.
\end{proof}

\section{Proofs omitted from \Cref{sec:monoidal-cs}} \label{ax:monoidal-cs-proofs}

\factor*
\begin{proof}
  This is a consequence of the fact that $q^{\ast}$ is full. Given any "diagram context", we argue that we can obtain a (non-unique) "diagram context" of the form of a raw optic
  $$t₁ ⨾ (\id{M_1} \smallotimes \Ctx{x₁} \smallotimes \id{N_1}) ⨾ t₂ ⨾ (\id{M_2} \smallotimes \Ctx{x₂} \smallotimes \id{N_2}) ⨾ ... ⨾ (\id{M_n} \smallotimes \Ctx{xₙ} \smallotimes \id{N_n}) ⨾ t_{n+1}.$$
  Indeed, by structural induction, if the diagram is formed by a hole or a generator, it can be put in raw optic form by adding identities; if the diagram is a composition, we can put both factors in raw optic form and check that their composition is again in raw optic form; if the diagram is a tensoring of two diagrams in raw optic form, we can always apply the interchange law and note that whiskering a raw optic by an object returns again a raw optic.

  It is the case that every map $G → \mathsf{clique}(H)$ arises as a map $G' → H$ for some "multigraph" $G'$ such that $G = \mathsf{clique}(G')$. Combining both facts, we obtain the desired result.
\end{proof}

\rawlang*
\begin{proofsketch}
  A raw representative amounts to choosing a specific ordering of the holes and generators in a "diagram context". By definition (\Cref{lem:factor}), these quotient to the original diagram contexts. In particular, closed derivations quotient to the same element of $ℂ$.
\end{proofsketch}

\uniquemonoidal*
\begin{proof}
  Using the free-forgetful adjunction, the raw representative, $\rawr{\phi}$, determines a unique "multifunctor" $\fmult{G'} \to \raw[ℂ]$.
  Using the adjunction of \Cref{thm:opticadjoint}, this in turn determines a unique "monoidal functor" $\cont{(\fmult{G'})} \to ℂ$.
  Finally, using the commutativity of $\cont$ with $\fmult{}$ (\Cref{prop:comadj}), we obtain $I_\phi : \fpro{(\mathcal{C}G')} \to \mathbb{C}$.
  Explicitly, the action of $I_\phi$ on generators is given by: $A^L \mapsto \pi_1(\rawr{\phi}(A))$, $A^R \mapsto \pi_2(\rawr{\phi}(A))$, $(f,i) \mapsto \pi_i(\rawr{\phi}(f))$, where $\pi$ are projections.
\end{proof}

\bij*
\begin{proof}
  Let $d \in \fmult{G}(;S)$ be a "derivation". We define a family of functions $\{C_X : \fmult{G}(;X) \to \fpro{(\mathcal{C}G)}(X^L,X^R)\}_{X \in G}$ by structural recursion. There are two cases: if $d$ is a generating "operation" $d \in G(;S)$, then $C_S(d) := (d,0) : S^L \to S^R$. Otherwise, $d$ is a composite $(p_1,...,p_n)\comp g $ where $g \in G(X_1,...,X_n;S)$ is a generating "operation" and $p_i \in \fmult{G}(;X_i)$, in which case $C_S(d) := (g,0)\comp C_{X_1}(p_1) \comp (g,1) \comp ... \comp C_{X_n}(p_n) \comp (g,n)$.

  We define functions $\inv{C}_S$ right to left in a similar fashion. Let $c \in \fpro{(\cont{G})}(S^L;S^R)$ be an "optical contour". If $c = (c',0)$ is a generating "sector" then $\inv{C}_S(c) := c'$. Otherwise $c$ is a composite $((g,0) ፡ S^L → M_1 \smallotimes X_1^L \smallotimes N_1) \comp c_1 \comp ((g,1) ፡ M_1 \smallotimes X_1^R \smallotimes N_1 → M_2 \smallotimes X_2^L \smallotimes N_2) \comp ... \comp c_n \comp ((g,n) ፡ M_1 \smallotimes X_1^R \smallotimes N_1 → S^R)$ where $(g,i)$ are generating "sectors" and $c_i \in \fpro{(\cont{G})}(X_i^L, X_i^R)$, in which case $\inv{C}_S(c) := (\inv{C}_{X_1}(c_1), ..., \inv{C}_{X_n}(c_n))\comp g$. It is clear that these functions are mutually inverse and hence form a bijection.
\end{proof}

\main*
\begin{proof}
  By \Cref{lem:raw-lang}, the languages $L(\mathcal{G})$ and $L((\rawr{\phi}, S))$ are equal for any "raw representative" $\rawr{\phi}$ of $\phi$, where $L((\rawr{\phi}, \mathbb{S})) = \rawr{\phi}[\fmult{G'}(;S)]$. It therefore suffices to show that $\rawr{\phi}[\fmult{G'}(;S)] = I_\phi[\fpro{(\cont{G'})}(S^L; S^R)]$. %
    We show the inclusion left to right. Let $d \in \fmult{G'}(;S)$ be a "derivation", and let $C_S(d)$ be the corresponding "optical contour" given by \Cref{lem:bij}. Then by the definition of $I_\phi$ (\Cref{lem:uniquemonoidal}) and $C_S$, we have $I_\phi(C_S(d)) = \rawr{\phi}(d)$.
  We show the inclusion right to left. Let $g \in \fpro{(\cont{G'})}(S^L; S^R)$ be a contour from $S^L$ to $S^R$, and let $\inv{C}_S(g)$ be the corresponding "derivation" given by \Cref{lem:bij}. Then just as before we have $\rawr{\phi}(\inv{C}_S(g)) = I_\phi(g)$.
\end{proof}

\end{document}